 \newcommand{\Z}{\mathbb{Z}}
 \newcommand{\add}{{\mathbb{Z}_2\mathbb{Z}_4}}
 \newcommand{\C}{{\cal C}}
 \newcommand{\zero}{{\mathbf{0}}}
 \newcommand{\cG}{{\cal G}}
 \newcommand{\vv}{{\bf  v}}
 \newcommand{\vw}{{\bf  w}}
\newtheorem{theorem}{Theorem}
\newtheorem{definition}[theorem]{Definition}
\newtheorem{proposition}[theorem]{Proposition}
\newtheorem{lemma}[theorem]{Lemma}
\newtheorem{example}{Example}
\title{Binary Images of $\add$-Additive Cyclic Codes\thanks{%
  This work has been partially supported by the Spanish MINECO grants 
TIN2016-77918-P(AEI/FEDER, UE) and MTM2015-69138-REDT, and by the Catalan AGAUR 
grant 2014SGR-691.}}
\author{J.~Borges, S.T.~Dougherty,
C.~Fern\'{a}ndez-C\'{o}rdoba and R. Ten-Valls\thanks{S.T.~Dougherty is 
at 
Department of
Mathematics, University of Scranton,
Scranton, PA 18510, USA (email: doughertys1@scranton.edu).}\thanks{J.~Borges, 
C.~Fern{\'a}ndez-C{\'o}rdoba and R. Ten-Valls are members of the
Department of Information and Communications Engineering, Universitat
Aut\`{o}noma de Barcelona, 08193-Bellaterra,
Spain (email:~\{jborges, cfernandez, rten\}@deic.uab.cat).}}
\date{\today}
\begin{document}

\maketitle

\begin{abstract}
A ${\mathbb{Z}}_2{\mathbb{Z}}_4$-additive code ${\cal 
C}\subseteq{\mathbb{Z}}_2^\alpha\times{\mathbb{Z}}_4^\beta$ is called cyclic if 
the set of coordinates can be partitioned into two subsets, the set of 
${\mathbb{Z}}_2$ and the set of ${\mathbb{Z}}_4$ coordinates, such that any 
cyclic shift of the coordinates of both subsets leaves the code invariant. We 
study the binary images of $\Z_2\Z_4 $-additive cyclic codes. We determine all 
$\add$-additive cyclic codes with odd $\beta$ whose Gray images are linear 
binary codes.
\end{abstract}

\section{Introduction}

Denote by  $\Z_2$ and $\Z_4$ the rings of integers modulo 2 and modulo 4, 
respectively.  We denote the space of $n$-tuples over these rings as $\Z_2^n$ 
and $\Z_4^n$. A binary code is any non-empty subset $C$ of $\Z_2^n$. If that 
subset is a vector space then we say that it is a linear code.  Any non-empty 
subset $\C$ of $\Z_4^n$ is a quaternary code and a submodule of $\Z_4^n$ is 
called a linear code over $\Z_4$.

In 1994, it is proven by Hammons \textit{et al.}, \cite{sole}, that certain 
good 
non-linear binary codes can be seen as binary images of linear codes over 
$\Z_4$ 
under the Gray map. After \cite{sole}, the study of codes over $\Z_4$ and other 
finite rings has been developing and the construction of Gray maps has been a 
topic of study. In particular, the study of the structure of cyclic codes over 
$\Z_4$ was discussed by Calderbank \emph{et al.} \cite{Z4CyclicCalderbank} and 
by Pless and Qian \cite{PQ}.


In Delsarte's paper \cite{del}, he defines additive codes as  subgroups of the 
underlying abelian group in a translation association scheme. For the binary 
Hamming scheme, namely, when the underlying abelian group is of order $2^{n}$, 
the only structures for the abelian group are those of the form 
$\Z_2^\alpha\times \Z_4^\beta$, with $\alpha+2\beta=n$. This means that  the 
subgroups of $\Z_2^\alpha\times \Z_4^\beta$ are the only additive codes in a 
binary Hamming scheme, \cite{transProper}.

A $\mathbb{Z}_2\mathbb{Z}_4$-linear code is a binary image of a 
$\mathbb{Z}_2\mathbb{Z}_4$-additive code, that is an additive subgroup of 
$\mathbb{Z}_2^\alpha\times\mathbb{Z}_4^\beta$. These 
$\mathbb{Z}_2\mathbb{Z}_4$-linear codes were first introduced by Rif\`{a} and 
Pujol in 1997, \cite{transProper}, as abelian translation-invariant propelinear 
codes. Later, an exhaustive description of $\mathbb{Z}_2\mathbb{Z}_4$-linear 
codes was done by Borges \emph{et al.} in \cite{AddDual}. The structure and 
properties of $\mathbb{Z}_2\mathbb{Z}_4$-additive codes have been intensely 
studied, for example in \cite{MDS}, \cite{z2z4SD}, \cite{Z2Z4RK}. 

In \cite{Abu}, Abualrub \emph{et al.} define 
${\mathbb{Z}}_2{\mathbb{Z}}_4$-additive cyclic codes. A code in 
$\mathbb{Z}_2^\alpha\times\mathbb{Z}_4^\beta$ is called cyclic if the set of 
coordinates can be partitioned into two subsets, the set of coordinates over 
${\mathbb{Z}}_2$ and the set of coordinates over ${\mathbb{Z}}_4$, such that 
any 
cyclic shift of the coordinates of both subsets leaves the code invariant. 
These 
codes can be identified as submodules of the $\mathbb{Z}_4[x]$-module 
$\mathbb{Z}_2[x]/\langle x^\alpha-1\rangle\times\mathbb{Z}_4[x]/\langle 
x^\beta-1\rangle$. 

In \cite{sole}, the authors give necessary and sufficient conditions for a
binary code to be $\Z_4$-linear, and for the binary image of a code over $\Z_4$ 
to be
a linear code. In \cite{wolfmann2} and \cite{wolfmann}, Wolfmann studies the 
condition for the image of a cyclic code over $\Z_4$ of odd length to be 
linear. 
Moreover, he
proves that the cyclic structure is preserved after a convenient permutation
of coordinates.

The aim of this paper is to determine all $\add$-additive cyclic codes with odd 
$\beta$ whose Gray images are linear binary codes. Furthermore, the binary 
generators of the linear image of a $\add$-additive cyclic code are determined 
under the Nechaev-Gray map.

\section{Preliminaries}
\subsection{$\add$-additives codes}

A ${\mathbb{Z}}_2{\mathbb{Z}}_4$-additive code ${\cal C}$ is a subgroup of
${\mathbb{Z}}_2^\alpha\times{\mathbb{Z}}_4^\beta$ (see \cite{AddDual}). For a 
vector ${\bf u} \in \Z_2^\alpha\times\Z_4^\beta$, we write ${\bf u}=(u\mid 
u')$, 
where $u=(u_0,\dots,u_{\alpha-1})\in\Z_2^\alpha$ and 
$u'=(u'_0,\dots,u'_{\beta-1})\in\Z_4^\beta$. Since ${\cal C}$ is a
subgroup of $\mathbb{Z}_2^\alpha\times\mathbb{Z}_4^\beta$, it is also isomorphic
to a commutative structure of the form 
$\mathbb{Z}_2^\gamma\times\mathbb{Z}_4^\delta$
and it has $|{\cal C}| = 2^{\gamma +2\delta}$ codewords.

Let $X$ (respectively $Y$) be the set of $\mathbb{Z}_2$ (respectively
$\mathbb{Z}_4$) coordinate positions, so $|X| =\alpha$ and $|Y| = \beta$. Unless
otherwise stated, the set $X$ corresponds to the first $\alpha$ coordinates and
$Y$ corresponds to the last $\beta$ coordinates. Let
$\C_X $
be the binary punctured code of ${\cal C}$ by deleting the
coordinates outside $X$. Define similarly the quaternary code $\C_Y$.

Let ${\cal C}_b$ be the subcode of ${\cal C}$ which contains all order two 
codewords and let $\kappa$ be the
dimension of $({\cal C}_b)_X$, which is a binary linear code. For the case 
$\alpha = 0$, we write $\kappa = 0$. With all these parametres, we say that the 
code ${\cal C}$ is of type $(\alpha, \beta; \gamma, \delta; \kappa).$ Moreover, 
in \cite{AddDual}, it is shown that $\C$ is permutation
equivalent to a
$\add$-additive code with standard generator
matrix of the form:
\begin{equation}\label{eq:StandardForm}
\cG_S= \left ( \begin{array}{cc|ccc}
I_{\kappa} & T_b & 2T_2 & \zero & \zero\\
\zero & \zero & 2T_1 & 2I_{\gamma-\kappa} & \zero\\
 \zero & S_b & S_q & R & I_{\delta} \end{array} \right ),
\end{equation} \noindent where $I_k$ is the identity matrix of size $k\times 
k$; 
$T_b, S_b$
are matrices over $\Z_2$;  $T_1, T_2, R$ are
matrices over $\Z_4$ with all entries in $\{0,1\}\subset\Z_4$;
and $S_q$ is a matrix over $\Z_4$.

A $\add$-additive code $\C$ is said to be separable if $\C = \C_X \times \C_Y$.
Otherwise the code is said to be non-separable.

Let $u'=(u'_0,\dots, u'_{n-1})$ be an element of $\mathbb{Z}_4^n$ such that 
$u'_i=\tilde{u}'_i + 2 \hat{u}'_i$, for $i=0, \dots, n-1$ and with 
$\tilde{u}'_i, \hat{u}'_i\in \{0,1\}.$ As in \cite{sole}, the \emph{Gray map} 
$\phi$ from $\mathbb{Z}_4^n$ to $\mathbb{Z}_2^{2n}$ is defined by
$$\phi({ u}')= (\hat{u}'_0,\dots, \hat{u}'_{n-1}, \tilde{u}'_0 + 
\hat{u}'_0,\dots, \tilde{u}'_{n-1}+\hat{u}'_{n-1} ).$$

The \emph{Nechaev permutation} is the permutation $\sigma$ on 
$\mathbb{Z}_2^{2n}$ with $n$ odd defined by
$$\sigma(v_0,v_1,\dots,v_{2n-1}) = (v_{\tau(0)}, 
v_{\tau(1)},\dots,v_{\tau(2n-1)}),$$
where $\tau$ is the permutation on $\{0,1,\dots, 2n-1\}$ given by
$$(1,n+1) (3,n+3)\cdots (2i+1, n+2i+1)\cdots (n-2, 2n-2).$$
Let $\psi$ be the map from $\mathbb{Z}_4^n$ into $\mathbb{Z}_2^{2n}$ defined by 
$\psi = \sigma\phi$, with $n$ odd. The map $\psi$ is called the 
\emph{Nechaev-Gray map}, \cite{wolfmann}.

The \emph{extended Gray map} $\Phi$ and the \emph{extended Nechaev-Gray map} 
$\Psi$ are the maps from $\mathbb{Z}_2^\alpha\times\mathbb{Z}_4^\beta$ to 
$\mathbb{Z}_2^{\alpha+2\beta}$ given by
$$\Phi({u}\mid {u'})= ({ u}\mid \phi({u'})),\quad\Psi({ u}\mid{ u'})= ({ u}\mid 
\psi({ u'})).$$

\subsection{$\add$-additive cyclic codes}

Cyclic codes have been a primary area of study for coding theory, 
\cite{macwilliams}. Recently, the class of $\add$-additive cyclic codes has 
been 
defined in \cite{Abu}.

For ${\bf u} = (u \mid u' ) \in \Z_2^\alpha \times\Z_4^\beta$, define the 
cyclic 
shift $\pi$ by
$\pi({\bf u}) = (\pi (u) \mid \pi(u')) $, where $\pi (u) = \pi ( 
u_0,u_1,\dots,u_{\alpha-1}) = (u_{\alpha-1},u_0,u_1,\dots,u_{\alpha-2})$ and 
$\pi (u') = (u'_{\beta-1},u'_0,u'_1,\dots,u'_{\beta-2})$ .
We say that a $\add$-additive code $\C$ is cyclic if $\pi(\C) = \C.$

There exists a bijection between $\Z_2^\alpha \times\Z_4^\beta$ and 
$R_{\alpha,\beta}=\mathbb{Z}_2[x]/(x^\alpha-1)\times\mathbb{Z}_4[x]
/(x^\beta-1)$ 
given by:
\begin{align*}
(u_0, u_1,\dots, u_{\alpha-1}\mid  u'_0,&\dots, u'_{\beta-1})\mapsto\\ &(u_0+ 
u_1x+\dots +  u_{\alpha-1}x^{\alpha-1}\mid  u'_0+\dots 
+u'_{\beta-1}x^{\beta-1}).
\end{align*}
Therefore, as it is common in the studies of cyclic codes, any codeword is 
identified as a vector or as a polynomial.

From now on, the binary reduction of a polynomial $p(x) \in \mathbb{Z}_4[x]$ 
will be denoted by $\tilde{p}(x) $.
Let $p (x) \in \Z_4 [x]$ and $ (b(x)\mid a(x)) \in R_{\alpha, \beta}$, and 
consider the following multiplication $p(x)\star(b(x)\mid 
a(x))=(\tilde{p}(x)b(x)\mid p(x)a(x)).$ From \cite{Abu}, $R_{\alpha, \beta}$ is 
a $\Z_4 [x]$-module with respect to this multiplication.

Let ${ u}'(x) =\tilde{{ u}}'(x) + 2\hat{u}'(x) $ be the polynomial 
representation of ${ u}'\in\mathbb{Z}_4^n$.
Then, the polynomial version of the Gray map is $\phi({ u}'(x) )= (\hat{u}'(x) 
, 
\tilde{ u}'(x) +
\hat{u}'(x) ).$ In the following, a polynomial $p (x) \in \Z_2 [x]$ or $\Z_4 
[x]$ will be denoted simply by $p $.

Using the polynomial representation, an equivalent definition of 
$\add$-additive 
cyclic codes is the following.

\begin{definition}[\cite{Abu}]
A subset $\C\subseteq R_{\alpha,\beta}$ is called a $\add$-additive cyclic code 
if $\C$ is a $\Z_4[x]$-submodule of $R_{\alpha,\beta}$.
\end{definition}

From \cite{Abu}, if $\beta$ is odd, we know that if $\C$ is a $\add$-additive 
cyclic code then it is of the form
\begin{equation} \label{form}
\langle (b\mid { 0}), (\ell \mid fh +2f) \rangle,
\end{equation}
where $fhg = x^\beta -1$ in $\Z_4[x]$, and $b$ divides $x^\alpha-1$ in 
$\Z_2[x]$, and we can assume that $deg(\ell) < deg(b).$
In this case we have that $|C| = 2^{\alpha - deg(b)} 4^{deg(g) } 2^{deg(h)}.$
From now on we consider that $\beta$ is odd. Then $f$, $g$ and $h$ are pairwise 
coprime
polynomials. Since $h$ and $g$ are
coprime, there exist polynomials $\lambda$ and $\mu$, that will be used later
along the paper, such that
\begin{equation}\label{eq:lambda-mu}
 \lambda h+\mu g=1.
\end{equation}

\begin{lemma}[{\cite[Corollary 2]{Z2Z4CDual}}]\label{bdiviXSgcd}
Let ${\cal C}$ be a ${\mathbb{Z}_2 {\mathbb{Z}_4}}$-additive cyclic code of 
type 
$(\alpha, \beta; \gamma , \delta; \kappa)$ with ${\cal C} = \langle (b \mid { 
0}), (\ell  \mid  f h  +2f ) \rangle$. Then, $b $ divides $\frac{x^\beta 
-1}{\tilde{f} } \gcd(b ,\ell )$ and $b $ divides $\tilde{h}  \gcd(b ,\ell 
\tilde{g} ).$
\end{lemma}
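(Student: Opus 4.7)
\medskip
\noindent\textbf{Proof plan.} The plan is to exhibit an explicit codeword of the form $(w \mid 0) \in \C$ with $w = \tilde h\tilde g\,\ell$, and then invoke the fact that $b$ generates the binary cyclic code $\C_0 := \{u \in \Z_2[x]/(x^\alpha-1) : (u \mid 0) \in \C\}$ in the canonical representation of \cite{Abu}. Once the key divisibility $b \mid \tilde h\tilde g\,\ell$ is established, both statements in the lemma follow by routine coprimality manipulations in $\Z_2[x]$.

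First I would apply the $\Z_4[x]$-module action of $hg$ to the second generator: $(hg)\star(\ell \mid fh+2f) = (\widetilde{hg}\,\ell \mid hg(fh+2f)) = (\tilde h\tilde g\,\ell \mid (fgh)(h+2))$. Since $fgh = x^\beta-1$ in $\Z_4[x]$, the $\Z_4$-part is a multiple of $x^\beta-1$ and hence vanishes in $\Z_4[x]/(x^\beta-1)$. Therefore $(\tilde h\tilde g\,\ell \mid 0) \in \C$, and since $b$ generates $\C_0$, this forces $b \mid \tilde h\tilde g\,\ell$ in $\Z_2[x]$. Noting that $\widetilde{fhg} = \tilde f\tilde h\tilde g = x^\beta-1$ in $\Z_2[x]$, this reads $b \mid \frac{x^\beta-1}{\tilde f}\,\ell$.

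The two displayed divisibilities then follow from standard gcd arguments. For the first, set $d = \gcd(b,\ell)$, write $b = b_1 d$ and $\ell = \ell_1 d$ with $\gcd(b_1,\ell_1)=1$; the divisibility reads $b_1 d \mid \tilde h\tilde g\,\ell_1 d$, so $b_1 \mid \tilde h\tilde g\,\ell_1$, and coprimality of $b_1,\ell_1$ yields $b_1 \mid \tilde h\tilde g$, i.e., $b = b_1 d \mid \tilde h\tilde g\, d = \frac{x^\beta-1}{\tilde f}\gcd(b,\ell)$. The second is analogous: with $d' = \gcd(b,\ell\tilde g)$, $b = b_2 d'$, $\ell\tilde g = m\, d'$, $\gcd(b_2,m)=1$, rewriting $b \mid \tilde h\cdot(\ell\tilde g)$ gives $b_2 \mid \tilde h m$, hence $b_2 \mid \tilde h$ and $b \mid \tilde h\gcd(b,\ell\tilde g)$. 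The main obstacle is spotting the multiplier $hg$; once one observes that $hg(fh+2f) = (fgh)(h+2)$ is a multiple of $x^\beta-1$, the remainder of the argument is routine polynomial arithmetic.
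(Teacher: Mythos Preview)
The paper does not supply its own proof of this lemma; it is quoted from \cite[Corollary~2]{Z2Z4CDual} and used as a black box. So there is nothing to compare against directly.

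That said, your argument is correct and entirely self-contained. The key step---multiplying the second generator by $hg$ to kill the $\Z_4$-part, since $hg(fh+2f)=(fgh)(h+2)=(x^\beta-1)(h+2)\equiv 0$---is exactly the right move, and the appeal to the canonical form of \cite{Abu} (where $b$ is by construction the generator of the binary cyclic code $\{u:(u\mid 0)\in\C\}$) justifies the divisibility $b\mid \tilde h\tilde g\,\ell$ in $\Z_2[x]$. The two gcd reductions are routine and correct. One small point worth making explicit: the divisibility is first obtained in $\Z_2[x]/(x^\alpha-1)$, but since $b\mid x^\alpha-1$, membership in $\langle b\rangle$ there is equivalent to $b\mid \tilde h\tilde g\,\ell$ in $\Z_2[x]$; you implicitly use this but do not state it.
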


We can put the generator matrix (\ref{eq:StandardForm}) in the following form, 
\cite{Z2Z4CDual}.
\begin{equation} \label{gen}
\left(\begin{array}{ccc|ccc}
I_{\kappa_1} & T& T_{b_1}& {\bf 0}&{\bf 0}&{\bf 0} \\
{\bf 0} & I_{\kappa_2} & T_{b_2} & 2T_2 & {\bf 0} & {\bf 0}  \\
{\bf 0} & {\bf 0} & {\bf 0} & 2T_1 & 2 I_{\gamma - (\kappa_1 + \kappa_2)}& {\bf 
0} \\
{\bf 0} & {\bf 0} & S' & S  & R& I_{\delta}
\end{array}\right).
\end{equation}

\begin{theorem}[{\cite[Theorem 5 and Proposition 
6]{Z2Z4CDual}}]\label{TypeDependingDeg}
Let $\C$ be a $\add$-additive cyclic code of type $(\alpha, \beta; \gamma , 
\delta; \kappa)$ with $\C = \langle (b\mid { 0}), (\ell \mid fh +2f) \rangle ,$ 
where $fhg = x^\beta -1.$
Then
\begin{align*}
\gamma &= \alpha -\deg(b)+\deg(h),\\
\delta &= \deg(g),\\
\kappa &= \alpha -\deg(\gcd(\ell \tilde{g}, b))
\end{align*}
and
$$\kappa_1= \alpha -\deg(b),\quad \kappa_2= \deg(b)-\deg(\gcd(b, \ell 
\tilde{g})),$$
$$\delta_1= \deg(\gcd(b,\ell \tilde{g})) - \deg(\gcd(b,\ell)) \mbox{ and }
\delta_2=\deg(g)-\delta_1.$$
\end{theorem}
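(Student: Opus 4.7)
The plan is to read off each parameter from invariants of $\C$ and then match them against quantities computed directly from the generating set $\{(b\mid\zero),\,(\ell\mid fh+2f)\}$. From the standard form (\ref{gen}) we have $|\C|=2^{\gamma+2\delta}$, $|2\C|=2^{\delta}$, $\dim (\C_b)_X=\kappa$, $\dim \C_X=\kappa+\delta_1$, and $\dim \C_0=\kappa_1$, where $\C_0=\{(u\mid\zero)\in\C\}$ is the subcode of codewords with zero quaternary part. Once these five quantities are expressed in terms of $\alpha,b,\ell,f,g,h$, the remaining parameters follow from $\kappa_2=\kappa-\kappa_1$ and $\delta_2=\delta-\delta_1$.

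For $\delta$, observe that $2\star(b\mid\zero)=(\zero\mid\zero)$ and $2\star(\ell\mid fh+2f)=(\zero\mid 2fh)$, so $2\C$ is the $\Z_4[x]$-orbit of $(\zero\mid 2fh)$. Since $\beta$ is odd, $x^\beta-1=fhg$ with $\tilde f,\tilde g,\tilde h$ pairwise coprime, and hence $2fh\cdot q\equiv 0\pmod{x^\beta-1}$ if and only if $\tilde g\mid\tilde q$. This yields $|2\C|=2^{\deg g}$, so $\delta=\deg g$. Combined with the known $|\C|=2^{\alpha-\deg b+\deg h}\cdot 4^{\deg g}$, this gives $\gamma=\alpha-\deg b+\deg h$.

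For $\kappa$ and $\kappa_1$, a generic element of $\C$ has the form $(\tilde p\,b+\tilde q\,\ell\mid q(fh+2f))$, and it lies in $\C_b$ exactly when $q(fh+2f)\in 2\Z_4[x]/(x^\beta-1)$; reducing modulo $2$ this becomes $\tilde f\tilde h\tilde q\equiv 0\pmod{\tilde f\tilde h\tilde g}$, i.e.\ $\tilde g\mid\tilde q$. Therefore $(\C_b)_X=\langle b,\ell\tilde g\rangle=\langle\gcd(b,\ell\tilde g)\rangle$ in $\Z_2[x]/(x^\alpha-1)$, so $\kappa=\alpha-\deg\gcd(b,\ell\tilde g)$. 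For $\kappa_1$, the projection $\C\to\C_Y=\langle fh+2f\rangle$ is surjective with $|\C_Y|=2^{\deg h}\cdot 4^{\deg g}$, so its kernel $\C_0$ has cardinality $2^{\alpha-\deg b}$; since the restriction of the $X$-projection to $\C_0$ is bijective, we obtain $\kappa_1=\alpha-\deg b$ and $\kappa_2=\deg b-\deg\gcd(b,\ell\tilde g)$.

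For $\delta_1$, the block structure of (\ref{gen}) shows $\dim\C_X=\kappa+\delta_1$: the $\kappa$ order-two generators contribute identity blocks in the first two $\Z_2$-column blocks, while the $\delta$ order-four generators contribute only through $S'$ in the disjoint third $\Z_2$-column block. Computing directly, $\C_X=\langle b,\ell\rangle=\langle\gcd(b,\ell)\rangle$, giving $\dim\C_X=\alpha-\deg\gcd(b,\ell)$; comparison yields $\delta_1=\deg\gcd(b,\ell\tilde g)-\deg\gcd(b,\ell)$ and then $\delta_2=\deg g-\delta_1$. The step requiring the most care is the annihilator calculation for $2fh$ in $\Z_4[x]/(x^\beta-1)$, together with the parallel analysis of when $q(fh+2f)$ is $2$-divisible; both rest on the coprimality of $\tilde f,\tilde g,\tilde h$, which holds precisely because $\beta$ is odd.
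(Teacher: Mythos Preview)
The paper does not prove Theorem~\ref{TypeDependingDeg}; it is quoted verbatim from \cite[Theorem~5 and Proposition~6]{Z2Z4CDual} and no argument is supplied here. There is therefore no proof in the present paper to compare your proposal against.

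As a self-contained argument your outline is sound. You correctly isolate five invariants that can be read off both from the standard form (\ref{gen}) and from the polynomial generating set, and you match them: $|2\C|$ gives $\delta$, the stated cardinality of $\C$ then gives $\gamma$, the description of $(\C_b)_X$ as $\langle\gcd(b,\ell\tilde g)\rangle$ gives $\kappa$, the kernel of the $Y$-projection gives $\kappa_1$, and $\C_X=\langle\gcd(b,\ell)\rangle$ gives $\kappa+\delta_1$. The annihilator computations in $\Z_4[x]/(x^\beta-1)$ and the characterization of when $q(fh+2f)$ lies in $2\Z_4[x]/(x^\beta-1)$ are correct and, as you note, rest on the pairwise coprimality of $\tilde f,\tilde g,\tilde h$ for odd $\beta$.

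One point deserves a caveat. The parameters $\kappa_1,\kappa_2,\delta_1,\delta_2$ are introduced in this paper only implicitly, as block sizes in the refined matrix (\ref{gen}); their identification with the invariants $\dim\C_0$ and $\dim\C_X-\kappa$ is part of how \cite{Z2Z4CDual} constructs that refined form. You are tacitly importing that identification rather than deriving it from (\ref{gen}) as displayed here: for instance, showing that no combination of the second block of (\ref{gen}) can have zero quaternary part (so that $\dim\C_0=\kappa_1$ exactly) requires knowing that the refined form is chosen with this property, which the present paper does not state. This is not a mathematical gap so much as a reliance on the cited reference for the precise meaning of $\kappa_1$ and $\delta_1$.
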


\begin{proposition}\label{2GenOrderTwoCode}
Let $\C$ be a $\add$-additive cyclic code with $\C = \langle (b\mid{0}), (\ell
\mid fh +2f) \rangle$. Then
$\C_b=\langle (b\mid{ 0}), (\bar{\mu}\ell \tilde{g} \mid 2f)\rangle.$
\end{proposition}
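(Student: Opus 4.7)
My approach is to prove the equality by double inclusion. The inclusion $\langle(b\mid 0),(\bar{\mu}\ell\tilde{g}\mid 2f)\rangle\subseteq\C_b$ reduces to checking that each generator lies in $\C_b$, because scalar multiplication by any $p(x)\in\Z_4[x]$ preserves order two, so $\C_b$ is itself a $\Z_4[x]$-submodule of $\C$. The generator $(b\mid 0)$ is obviously in $\C_b$. For the other, I would exploit the Bezout identity $\lambda h+\mu g=1$ and compute
\begin{equation*}
(\mu g + 2\lambda)\star(\ell\mid fh+2f) = \bigl(\bar{\mu}\tilde{g}\ell \,\big|\, \mu g\cdot fh + 2\mu gf + 2\lambda fh + 4\lambda f\bigr).
\end{equation*}
Because $\mu g\cdot fh=\mu(x^\beta-1)\equiv 0$ and $4\lambda f=0$ in $\Z_4[x]/(x^\beta-1)$, the $\Z_4$-part collapses to $2f(\mu g+\lambda h)=2f$, giving exactly $(\bar{\mu}\ell\tilde{g}\mid 2f)$, which is of order two.

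For the reverse inclusion, I would take an arbitrary $c=p_1\star(b\mid 0)+p_2\star(\ell\mid fh+2f)\in\C_b$. The condition $2c=0$ forces $2p_2 fh\equiv 0\pmod{x^\beta-1}$, and since $\tilde{f}\tilde{g}\tilde{h}=x^\beta-1$ with pairwise coprime factors, this yields $\tilde{g}\mid\tilde{p}_2$, so $p_2=qg+2s$ in $\Z_4[x]/(x^\beta-1)$ for suitable $q,s$. Expanding then gives
\begin{equation*}
c=\bigl(\tilde{p}_1 b + \tilde{q}\tilde{g}\ell\,\big|\,2(qg+sh)f\bigr).
\end{equation*}
Setting $\alpha_2:=qg+sh$, the product $\alpha_2\star(\bar{\mu}\ell\tilde{g}\mid 2f)$ matches the $\Z_4$-part of $c$, while the difference of the $\Z_2$-parts simplifies, via $1-\bar{\mu}\tilde{g}=\bar{\lambda}\tilde{h}$ (the mod-$2$ reduction of the Bezout identity), to $\tilde{p}_1 b + \ell\tilde{g}\tilde{h}(\tilde{q}\bar{\lambda}-\tilde{s}\bar{\mu})$. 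If this lies in the ideal $(b)\subset\Z_2[x]/(x^\alpha-1)$, then I can choose $\alpha_1$ realizing $c=\alpha_1\star(b\mid 0)+\alpha_2\star(\bar{\mu}\ell\tilde{g}\mid 2f)$.

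The main obstacle, therefore, is establishing $b\mid\ell\tilde{g}\tilde{h}$ in $\Z_2[x]/(x^\alpha-1)$, and this is precisely where Lemma \ref{bdiviXSgcd} enters: it states $b\mid\frac{x^\beta-1}{\tilde{f}}\gcd(b,\ell)=\tilde{g}\tilde{h}\gcd(b,\ell)$, and combined with $\gcd(b,\ell)\mid\ell$ this yields $b\mid\ell\tilde{g}\tilde{h}$. Beyond this structural divisibility input and the Bezout identity, the remaining computations are routine polynomial arithmetic in $R_{\alpha,\beta}$.
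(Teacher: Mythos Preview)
Your argument is correct. Both inclusions go through as you describe: the forward one via the single multiplier $\mu g+2\lambda$, and the reverse one via the element-level decomposition, with the divisibility $b\mid\ell\tilde g\tilde h$ supplied by Lemma~\ref{bdiviXSgcd} exactly as in the paper.

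The route, however, differs from the paper's. The paper does not work directly from the definition of $\C_b$; instead it quotes from \cite{Z2Z4CDual} the three-generator description $\C_b=\langle(b\mid 0),(\ell\tilde g\mid 2fg),(0\mid 2fh)\rangle$ and proves both inclusions by passing through this intermediate form: combining $(\ell\tilde g\mid 2fg)$ and $(0\mid 2fh)$ via $\mu,\lambda$ to get $(\tilde\mu\ell\tilde g\mid 2f)$, and conversely hitting $(\tilde\mu\ell\tilde g\mid 2f)$ with $g$ and with $h$ to recover the three generators. Your proof is more self-contained---it needs no external description of $\C_b$---at the cost of a slightly longer element-chasing computation for the reverse inclusion. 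The paper's version is shorter precisely because the heavy lifting (identifying the order-two subcode) has already been exported to \cite{Z2Z4CDual}. The essential structural input, Lemma~\ref{bdiviXSgcd}, is the same in both.
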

\begin{proof}
By {\cite[Lemma 3]{Z2Z4CDual}}, $\C_b=\langle (b\mid{ 0}), (\ell \tilde{g} \mid
2fg), ({ 0}\mid 2fh) \rangle$ with $\gcd(h, g)=1$.
Note that $(b\mid 0)\in \C$ and $g\star(\ell \mid f h  +2f )=(\ell \tilde{g} 
\mid 2fg)\in\C$. Since $(0\mid 2fh)\in \C$, we obtain $\mu\star(\ell 
\tilde{g}\mid 2fg)+\lambda\star(0\mid 2fh)= (\tilde{\mu}\ell \tilde{g}\mid 
2f)\in\C$ for $\lambda, \mu$ in (\ref{eq:lambda-mu}). Hence, 
$(\tilde{\mu}\ell\tilde{g}\mid 2f)\in \C_b$.

Let $\C'_b=\langle (b\mid{ 0}), (\tilde{\mu}\ell \tilde{g} \mid 2f)\rangle$. By
construction, $\C'_b\subseteq \C_b$.
We shall prove that $\C_b\subseteq\C'_b$. It is an evidence that $(b\mid
0)\in\C'_b$. Since $\lambda h+\mu g=1$ we have that $\tilde{\lambda}
\tilde{h}\ell\tilde{g}+\tilde{\mu} \tilde{g}\ell\tilde{g}=\ell\tilde{g}$. By
Lemma \ref{bdiviXSgcd}, we know that $b$ divides $\bar{h}\ell\bar{g}$. So
$\bar{\mu} \tilde{g}\ell\tilde{g}=\ell\tilde{g}\pmod{b}$. Therefore
$g\star(\tilde{\mu}\ell \tilde{g} \mid 2f)= (\tilde{g}\tilde{\mu}\ell \tilde{g}
\mid 2fg)=(\ell \tilde{g} + p b\mid 2fh)=(\ell \tilde{g}\mid 2fh)+(pb\mid 0)
\in \C'_b$, for some $p\in\Z_2[x]$. Hence $(\ell \tilde{g}\mid 2fh)
\in \C'_b$.

Applying Lemma \ref{bdiviXSgcd}, $h\star(\tilde{\mu}\ell \tilde{g} \mid 2f)=
(\tilde{h}\tilde{\mu}\ell \tilde{g} \mid 2fh)=(p'b\mid 2fh)\in \C'_b$. Thus, 
$(0\mid 2fh)\in \C'_b$, and hence
$\C_b\subseteq \C'_b$.
\end{proof}

Note that if $\C$ is a $\add$-additive cyclic code then ${\cal C}_X$ and ${\cal
C}_Y$ are a cyclic code over $\mathbb{Z}_2$ and a cyclic code over
$\mathbb{Z}_4$ generated
by $gcd(b,\ell)$ and $(fh+2f)$, respectively (see
\cite{macwilliams}, \cite{Wan}).
The following example illustrates that the converse is not
true in general.
\begin{example}
Let $\C$ be a $\add$-additive code generated by
\begin{equation} \label{ex:CxCyCyclicCno}
\left(\begin{array}{cc|ccc}
1& 0& 1& 0&{ 0} \\
0 & 1 & 0 & { 1} & {0}  \\
{ 0} & {0} & 0 & 0& {1}
\end{array}\right).
\end{equation}
Clearly, $\C_X$ and $\C_Y$ are cyclic codes. But $\pi(0,0\mid 0,0,1)=(0,0\mid 
1,0,0)$ does not belong to $\C$.

\end{example}

Nevertheless, the next theorem shows that if the $\add$-additive cyclic codes 
is 
separable then the converse is satisfied.
\begin{theorem}
A $\Z_2 \Z_4$-additive separable  code is cyclic if and only if $\C_X$ is a
binary cyclic code and $\C_Y$ is a $\Z_4$-additive cyclic code. Moreover,
$\C=\langle (b\mid { 0}), (0 \mid fh +2f)
\rangle$.
\end{theorem}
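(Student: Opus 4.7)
\medskip

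\noindent\textbf{Proof plan.} The biconditional is essentially a direct consequence of separability, which forces the two parts of a codeword to move independently under $\pi$. For the forward direction, assume $\C$ is cyclic. Given $u\in\C_X$, separability provides $(u\mid\zero)\in\C$; applying $\pi$ yields $(\pi(u)\mid\zero)\in\C$, so $\pi(u)\in\C_X$ and $\C_X$ is a binary cyclic code. The identical argument with $(\zero\mid u')$ for $u'\in\C_Y$ shows that $\C_Y$ is a $\Z_4$-cyclic code. I expect this direction to be short and routine.

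For the converse, suppose $\C_X$ is cyclic and $\C_Y$ is cyclic. Take any $(u\mid u')\in\C=\C_X\times\C_Y$. Then $\pi(u)\in\C_X$ and $\pi(u')\in\C_Y$, hence
$\pi(u\mid u')=(\pi(u)\mid\pi(u'))\in\C_X\times\C_Y=\C,$
so $\C$ is cyclic. Again a straightforward check, relying only on the factorization of $\pi$ across the two blocks.

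For the ``moreover'' part, I would begin from the general representation (\ref{form}), writing $\C=\langle(b\mid\zero),(\ell\mid fh+2f)\rangle$ with $b\mid x^\alpha-1$ and $fhg=x^\beta-1$. Because $\C$ is separable, $(0\mid fh+2f)\in\C$; subtracting from $(\ell\mid fh+2f)$ shows $(\ell\mid\zero)\in\C$. The set $\{u\in\Z_2[x]/(x^\alpha-1) : (u\mid\zero)\in\C\}$ is an ideal of $\Z_2[x]/(x^\alpha-1)$ and contains both $b$ and $\ell$, so it contains $\gcd(b,\ell)$. Replacing the generator $b$ by $\gcd(b,\ell)$, which still divides $x^\alpha-1$, we may take $\ell=0$ in the generating set and obtain $\C=\langle(b\mid\zero),(0\mid fh+2f)\rangle$, as required. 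The only subtlety here is noting that the representation in (\ref{form}) is not unique, so absorbing $\ell$ into the redefined $b$ is legitimate; this is the mildest point of the argument and the main (very modest) obstacle.
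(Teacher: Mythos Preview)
Your proof is correct and follows the same approach as the paper: the biconditional is argued exactly as you describe, and for the ``moreover'' clause the paper simply asserts that the polynomial representation follows immediately from the standard form~(\ref{form}) since the code is separable. Your version spells out that step (absorbing $\ell$ into $\gcd(b,\ell)$ and relabeling $b$), which is a legitimate expansion of what the paper leaves implicit.
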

\begin{proof}
Let $\C$ be a separable $\Z_2\Z_4$-additive cyclic code.  Let $u \in \C_X $ and 
$u' \in \C_Y.$
Then $(u\mid {\bf 0 }) \in \C$ and $\pi(u\mid {\bf 0 })
= (\pi(u)\mid {\bf 0}) \in \C$ which gives that $\pi(u) \in \C_X$.  Therefore, 
$C_X$
is cyclic.
Similarly, $({\bf 0 }\mid u') \in \C$ and $\pi(({\bf 0 }\mid u')) = ({\bf 0 
}\mid \pi(u'))
\in \C$ which gives that $\pi(u') \in \C_Y$. Therefore, $C_Y$ is cyclic.

If both $C_X$ and $C_Y$ are cyclic then $\pi(u\mid u') = (\pi(u)\mid \pi(u')) 
\in \C_X
\times \C_Y = \C$ and so $\C$ is cyclic. The polynomial representation follows 
immediately from the standard form of the generator polynomials given in 
(\ref{form}) since the code is separable.
\end{proof}

\section{Binary images of $\add$-additive cyclic codes}

It is well known that for a $\add$-additive code $\C$, the binary code 
$\Phi(\C)$
may not be linear, \cite{Z2Z4RK}.

Let ${\bf u} = (u \mid u' )$ and $\vv = (v \mid v')$ be in $\Z_2^\alpha
\times\Z_4^\beta$, and denote the componentwise product as ${\bf u}*\vv =
(u*v\mid u'*v')$.
From \cite{Z2Z4RK}, we have that
\begin{equation} \label{equation:linear}
\Phi(\vv+\vw) =
\Phi(\vv) + \Phi(\vw) + \Phi(2\vv*\vw).
\end{equation}
It follows immediately that $\Phi(\C)$ is
linear if and only if $2 {\bf u} * \vv \in \C$, for all ${\bf u}, \vv\in\C$.

Let ${p} $ be a divisor of $x^n-1$ in $\mathbb{Z}_2[x]$ with $n$ odd and let 
$\xi$ be a primitive $n$th root of unity over $\mathbb{Z}_2$. The polynomial 
$({p}\otimes{p}) $ is defined as the divisor of $x^n-1$ in $\mathbb{Z}_2[x]$ 
whose roots are the products $\xi^i\xi^j$ such that $\xi^i$ and $\xi^j$ are 
roots of ${p} .$

\subsection{Images under the Gray map}
In \cite{wolfmann}, the author characterizes all linear cyclic codes over 
$\Z_4$ 
of odd length whose Gray map images are linear binary codes as it is shown in 
the following theorem.
\begin{theorem}[{\cite[Theorem 20]{wolfmann}}]\label{Phi(CY)Linear}
Let ${\cal D}=\langle f h  +2f  \rangle$ be a ${{\mathbb{Z}_4}}$-additive 
cyclic 
code of odd length $n$, and where $f h g  = x^n -1.$ The following properties 
are equivalent.
\begin{enumerate}
\item $\gcd(\tilde{f} , (\tilde{g}\otimes\tilde{g}) )=1$ in $\mathbb{Z}_2[x]$;
\item $\phi ({\cal D})$ is a binary linear code of length $2n$.
\end{enumerate}
\end{theorem}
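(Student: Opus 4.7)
The plan is to translate the linearity of $\phi({\cal D})$ into a Fourier-analytic condition on the zero sets of $\tilde{f},\tilde{g},\tilde{h}$ via the Mattson--Solomon transform. First, the analogue of (\ref{equation:linear}) for $\phi\colon\Z_4^n\to\Z_2^{2n}$ shows that $\phi({\cal D})$ is linear if and only if $2u'*v'\in{\cal D}$ for all $u',v'\in{\cal D}$. Writing $u'=\tilde{u}'+2\hat{u}'$ and $v'=\tilde{v}'+2\hat{v}'$, one has $2u'*v'=2(\tilde{u}'*\tilde{v}')$ since $4=0$ in $\Z_4$, so only the binary reductions enter the criterion.

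Next I would give a cleaner description of ${\cal D}$. Since $n$ is odd, $\tilde{f}\tilde{g}\tilde{h}=x^n-1$ lifts by Hensel to a factorization into pairwise coprime polynomials in $\Z_4[x]$, yielding a Bezout identity $\lambda h+\mu g=1$. From $fhg\equiv 0$ I get $2fg=g(fh+2f)\in{\cal D}$ and $2fh=2(fh+2f)\in{\cal D}$, so $2f=\lambda\cdot 2fh+\mu\cdot 2fg\in{\cal D}$, and hence ${\cal D}=\langle fh,2f\rangle$. A short calculation then shows that ${\cal D}\cap 2\Z_4^n$ corresponds to the binary cyclic code $\langle\tilde{f}\rangle$ (under the identification $2\leftrightarrow 1$), while the reduction modulo $2$ of ${\cal D}$ equals $\langle\tilde{f}\tilde{h}\rangle$. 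Therefore the criterion reads: for all $a,b\in\langle\tilde{f}\tilde{h}\rangle$, the Hadamard product $a*b$ lies in $\langle\tilde{f}\rangle$.

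Now apply the DFT. Let $\xi$ be a primitive $n$-th root of unity over $\Z_2$ and let $F,G,H\subseteq\Z/n\Z$ be the disjoint Galois-closed exponent sets of the roots of $\tilde{f},\tilde{g},\tilde{h}$, so $F\sqcup G\sqcup H=\Z/n\Z$. With $\hat{c}_j=c(\xi^j)$, membership in a cyclic code is characterized by spectral support: $a\in\langle\tilde{f}\tilde{h}\rangle\Leftrightarrow\mathrm{supp}(\hat{a})\subseteq G$ and $c\in\langle\tilde{f}\rangle\Leftrightarrow\mathrm{supp}(\hat{c})\subseteq G\cup H$. The identity $\widehat{a*b}_j=n^{-1}\sum_{k+l\equiv j}\hat{a}_k\hat{b}_l$ (well-defined since $n$ is invertible in characteristic $2$) gives $\mathrm{supp}(\widehat{a*b})\subseteq G+G$. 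As the roots of $\tilde{g}\otimes\tilde{g}$ have exponents $G+G$ by its very definition, the condition $F\cap(G+G)=\emptyset$ is exactly $\gcd(\tilde{f},\tilde{g}\otimes\tilde{g})=1$, yielding $(1)\Rightarrow(2)$ at once.

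The main obstacle is the converse. Given $f_0=g_1+g_2\in F$ with $g_i\in G$, I must exhibit $a,b\in\langle\tilde{f}\tilde{h}\rangle$ with $\widehat{a*b}_{f_0}\ne 0$. I would take $a,b$ whose spectra are supported on the Galois orbits of $g_1$ and $g_2$, with $\hat{a}_{g_1}=\alpha$ and $\hat{b}_{g_2}=\beta$ in the splitting field $\mathbb{F}_{2^m}$ of $x^n-1$, extended by Frobenius ($\hat{c}_{2j}=\hat{c}_j^2$) so that $a,b\in\Z_2^n$. A direct expansion yields $\widehat{a*b}_{f_0}=n^{-1}\sum_{(i,j):\,2^ig_1+2^jg_2\equiv f_0\pmod n}\alpha^{2^i}\beta^{2^j}$, which contains the monomial $\alpha\beta$ coming from $(i,j)=(0,0)$. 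By Artin--Dedekind linear independence of the Frobenius automorphisms $\alpha\mapsto\alpha^{2^i}$ (applied separately in each variable), this two-variable polynomial is not identically zero as a function $\mathbb{F}_{2^m}\times\mathbb{F}_{2^m}\to\mathbb{F}_{2^m}$, so some choice of $\alpha,\beta$ produces the required witness and completes the proof.
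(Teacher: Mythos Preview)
The paper does not prove this statement at all: it is quoted verbatim from Wolfmann \cite{wolfmann} and used as a black box in the proof of Theorem~\ref{Phi(C)Linear}. So there is no ``paper's own proof'' to compare against.

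Your argument is essentially correct and is, in spirit, the Mattson--Solomon approach that underlies Wolfmann's original proof. The reduction to the purely binary criterion ``$a*b\in\langle\tilde f\rangle$ for all $a,b\in\langle\tilde f\tilde h\rangle$'' is clean and correct, as is the spectral reformulation $F\cap(G+G)=\emptyset$ and the forward implication. Two small points on the converse are worth tightening. First, when you build $a$ with $\hat a_{g_1}=\alpha$ and extend by Frobenius, the consistency condition on the orbit forces $\alpha^{2^{m_1}}=\alpha$, where $m_1$ is the size of the $2$-cyclotomic coset of $g_1$; so $\alpha$ must lie in $\mathbb{F}_{2^{m_1}}$, not in all of $\mathbb{F}_{2^m}$, and similarly for $\beta$. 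Second, your appeal to ``Artin--Dedekind in each variable'' then needs to be carried out over $\mathbb{F}_{2^{m_1}}\times\mathbb{F}_{2^{m_2}}$: fix $\beta$ and view $\sum_{(i,j)}\alpha^{2^i}\beta^{2^j}$ as an additive polynomial in $\alpha$ with coefficients $\sum_j\beta^{2^j}$; linear independence of the Frobenius maps over $\mathbb{F}_{2^{m_1}}$ forces each coefficient to vanish for all $\beta$, and a second application over $\mathbb{F}_{2^{m_2}}$ then kills every term, contradicting the presence of the $(0,0)$ summand. With these two clarifications made explicit, your proof stands.
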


Our aim in this section is to give a classification of all $\add$-additive 
cyclic codes with odd $\beta$ whose Gray images are linear binary codes. First, 
we will show that Theorem \ref{Phi(CY)Linear} is a necessary but not a 
sufficient condition for $\add$-additive cyclic codes.

\begin{lemma}
Let $\C$ be a $\add$-additive code such that $\Phi(\C)$ is linear. Then, $\C_Y$ 
is a $\Z_4$-additive code such that $\phi(\C_Y)$ is linear.
\end{lemma}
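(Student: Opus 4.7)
The plan is to use the fundamental criterion following equation~(\ref{equation:linear}): $\Phi(\C)$ is linear if and only if $2\mathbf{u}*\mathbf{v}\in \C$ for every $\mathbf{u},\mathbf{v}\in \C$. The analogous statement for $\Z_4$-additive codes (the $\alpha=0$ case, which is exactly the criterion used in \cite{sole}) says that $\phi(\C_Y)$ is linear if and only if $2u'*v'\in \C_Y$ for all $u',v'\in \C_Y$. So the task reduces to transferring the closure property from $\C$ down to the punctured code $\C_Y$.

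The first step is to pick arbitrary $u',v'\in \C_Y$ and, by the definition of the punctured code, choose preimages $u,v\in \Z_2^\alpha$ with $\mathbf{u}=(u\mid u')\in \C$ and $\mathbf{v}=(v\mid v')\in \C$. Then I would apply the componentwise product and observe that
\[
2\mathbf{u}*\mathbf{v}=(2(u*v)\mid 2(u'*v'))=(\zero \mid 2u'*v'),
\]
where the binary part vanishes because $2=0$ in $\Z_2$. Since $\Phi(\C)$ is linear, the left-hand side lies in $\C$, and projecting onto the $Y$-coordinates gives $2u'*v'\in \C_Y$. As $u',v'$ were arbitrary, this verifies the closure condition, and therefore $\phi(\C_Y)$ is linear.

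I don't expect any real obstacle here: the argument is a direct transfer through the projection $\C\to \C_Y$, using only the characterization~(\ref{equation:linear}) together with the fact that doubling kills the $\Z_2$-part. The only point worth double-checking is that $\C_Y$ is closed under the operation $u'\mapsto 2u'*v'$ (which follows from additivity of $\C$ plus the step above) and that this closure is precisely the criterion for linearity of $\phi(\C_Y)$, which is the specialization of~(\ref{equation:linear}) to $\alpha=0$.
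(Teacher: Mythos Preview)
Your argument is correct and is exactly the one the paper uses: although the lemma is stated without proof, the very same reasoning (lift $u',v'\in\C_Y$ to $\mathbf{u},\mathbf{v}\in\C$, note $2\mathbf{u}*\mathbf{v}=(\zero\mid 2u'*v')\in\C$, then project to $\C_Y$) appears verbatim as the forward direction in the proof of Lemma~\ref{CbSeparable}. There is nothing to add.
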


Note that the converse is not true in general. The following example 
illustrates 
it.
\begin{example}\label{example}
Consider the $\add$-additive code generated by the following matrix.
\begin{equation}
\left(\begin{array}{ccc|ccc}
1&0&0&0&0&0 \\
0&1&0&0&0&0\\
0&0&1&2&0&0\\
0&0&0&1&1&0\\
0&0&0&1&0&1\\
\end{array}\right)
 \end{equation}

Clearly, $\phi(\C_Y)$ is binary linear since $2(1,1,0)*(1,0,1) =(2,0,0) \in 
\C_Y$. Nevertheless, $\Phi(\C)$ is not binary linear because 
$2(0,0,0|1,1,0)*(0,0,0|1,0,1) = (0,0,0|2,0,0)$ that does not belong to $\C$.
\end{example}

Before we will be able to determine the classification of $\add$-additive 
cyclic 
codes, we must introduce a couple of lemmas.

\begin{lemma}\label{CbSeparable}
Let ${\cal C}=\langle (b \mid 0), (\ell \mid f h  +2f ) \rangle$ be a 
$\add$-additive cyclic code of length $\alpha+\beta$, $\beta$ odd, and where $f 
h g  = x^\beta -1.$ If $\C_b=\langle (b\mid 0),(0\mid 2f)\rangle$, then 
$\Phi(\C)$ is linear if and only if $\phi(\C_Y)$ is linear.
\end{lemma}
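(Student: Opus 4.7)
The plan is to leverage equation~\eqref{equation:linear}, which characterizes linearity of $\Phi(\C)$ by the condition $2\mathbf{u}*\mathbf{v}\in\C$ for all $\mathbf{u},\mathbf{v}\in\C$. For $\mathbf{u}=(u\mid u')$ and $\mathbf{v}=(v\mid v')$, the binary part vanishes automatically since $2u*v=0$ in $\Z_2^\alpha$, so $2\mathbf{u}*\mathbf{v}=(0\mid 2u'*v')$. As $u',v'$ range over $\C_Y$ when $\mathbf{u},\mathbf{v}$ range over $\C$, linearity of $\Phi(\C)$ is equivalent to $(0\mid 2u'*v')\in\C$ for all $u',v'\in\C_Y$, while the analogous criterion for $\C_Y$ says $\phi(\C_Y)$ is linear iff $2u'*v'\in\C_Y$ for all such pairs. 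The forward implication of the lemma is then exactly the preceding lemma.

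For the converse, I would assume $\phi(\C_Y)$ is linear and pick any $u',v'\in\C_Y$. Then $2u'*v'\in\C_Y$, and since $2(2u'*v')=0$ it lies in the order-two subcode $(\C_Y)_b$. The central step is to identify $(\C_Y)_b=\langle 2f\rangle$. Using $fhg=x^\beta-1$ one has $g\star(fh+2f)=gfh+2gf=2gf$ in $\Z_4[x]/(x^\beta-1)$, and then, invoking the Bezout relation $\lambda h+\mu g=1$ from~\eqref{eq:lambda-mu},
\[
(2\lambda+\mu g)\star(fh+2f)=2\lambda fh+2\mu gf=2(\lambda h+\mu g)f=2f,
\]
so $2f\in\C_Y$ and $\langle 2f\rangle\subseteq(\C_Y)_b$. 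A cardinality count, $|(\C_Y)_b|=|\C_Y|/|2\C_Y|=4^{\deg g}2^{\deg h}/2^{\deg g}=2^{\deg g+\deg h}=|\langle 2f\rangle|$, forces equality.

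With this identification, $2u'*v'\in\langle 2f\rangle$, so $(0\mid 2u'*v')\in\langle(0\mid 2f)\rangle$. The hypothesis $\C_b=\langle(b\mid 0),(0\mid 2f)\rangle$ immediately gives $\langle(0\mid 2f)\rangle\subseteq\C_b\subseteq\C$, yielding $2\mathbf{u}*\mathbf{v}\in\C$ and hence the linearity of $\Phi(\C)$.

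The main obstacle I anticipate is confirming $(\C_Y)_b=\langle 2f\rangle$: the size computation is routine, but exhibiting $2f$ as an element of $\C_Y$ relies on the explicit polynomial manipulation via $\lambda,\mu$ and on the pairwise coprimality of $f,g,h$, which is where the oddness of $\beta$ enters. Everything else is the componentwise-product characterization together with the specific shape of $\C_b$ in the hypothesis, which is precisely what lets us upgrade an element of $\langle 2f\rangle$ in the $Y$-part to a codeword of $\C$ whose $X$-part is zero.
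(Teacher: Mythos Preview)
Your proof is correct and follows the same strategy as the paper: both directions use the criterion $2\mathbf{u}*\mathbf{v}\in\C$, the forward direction is immediate, and for the converse one shows that $2u'*v'\in\langle 2f\rangle$ so that $(0\mid 2u'*v')\in\langle(0\mid 2f)\rangle\subseteq\C_b\subseteq\C$. The only difference is that you establish $(\C_Y)_b=\langle 2f\rangle$ from scratch via a Bezout computation and a cardinality count, whereas the paper reads it off directly from the hypothesis: since any $(v\mid w)\in\C$ with $2w=0$ automatically has order at most two (the $\Z_2$-part always satisfies $2v=0$), one has $(\C_Y)_b=(\C_b)_Y$, and the assumed form $\C_b=\langle(b\mid 0),(0\mid 2f)\rangle$ immediately gives $(\C_b)_Y=\langle 2f\rangle$.
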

\begin{proof}
Assume $\Phi(\C)$ is linear. Let $u', v'\in\C_Y$. There exist 
$\textbf{u}=(u\mid 
u'), \textbf{v}=(v\mid v')\in\C$. Since $\Phi(\C)$ is linear $2\textbf{u} \ast 
\textbf{v}=({\bf 0 }\mid 2u'\ast v')\in\C$ and hence $2u'\ast v'\in\C_Y$  and 
so 
$\phi(\C_Y)$ is linear.

In the case that $\phi(\C_Y)$ is linear, we consider $\textbf{u},\textbf{v}\in 
\C$. We have that $2\textbf{u}\ast \textbf{v}=({\bf 0 }\mid 2u'\ast v').$ Since 
$2u'\ast v'\in\C_Y$ and $\C_b=\langle (b\mid 0),(0\mid 2f)\rangle$, then $({\bf 
0 }\mid 2u'\ast v')\in \C$ and $\Phi(\C)$ is linear.
\end{proof}

Note that the assumption of Lemma \ref{CbSeparable} is equivalent to say that
$b$ divides $\ell \tilde{g}$ for a code ${\cal C}=\langle (b\mid 0), (\ell\mid
fh +2f) \rangle$. Theorem \ref{Phi(CY)Linear} shows when a $\Z_4$-cyclic code
${\cal D}$ has linear binary image by $\phi$. So our next objective is to give
an analogous result for $\add$-additive cyclic codes.

In general, a $\add$-additive code has linear binary image if for vectors ${\bf 
v}$ and ${\bf w}$ of the last $\delta$ rows of (\ref{gen}) then $2{\bf 
v}\ast{\bf w}\in\C$, \cite{Z2Z4RK}.

Let $\C$ be a $\add$-additive cyclic code with generator matrix as in 
(\ref{gen}), and denote by $\C'$ the subcode generated by the last $\delta$ 
rows 
of (\ref{gen}). Note that $\C'$ may not be a $\add$-additive cyclic code and 
that $\C'$ has a generator matrix of the form
\begin{equation} \label{genSubcode}
\left(\begin{array}{ccc|ccc}
{\bf 0} & {\bf 0} & {\bf 0} & 2T_2 & 2 I_{\gamma - (\kappa_1 + \kappa_2)}& {\bf
0} \\
{\bf 0} & {\bf 0} & S' & S  & R& I_{\delta}
\end{array}\right).
\end{equation}

\begin{lemma}
Let $\C$ be a $\add$-additive code with generator matrix (\ref{gen}). Then, 
$\C$ 
has linear binary image if and only if the subcode $\C'$ generated by matrix 
(\ref{genSubcode}) has linear binary image.
\end{lemma}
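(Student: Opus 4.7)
The plan is to reduce both linearity conditions to a bilinear test on the generating rows and then exploit the block shape of (\ref{gen}) to relate membership in $\C$ and in $\C'$. By equation (\ref{equation:linear}), for any subgroup ${\cal D}\subseteq\Z_2^\alpha\times\Z_4^\beta$ the image $\Phi({\cal D})$ is linear if and only if $2{\bf u}\ast\vv\in{\cal D}$ for all ${\bf u},\vv\in{\cal D}$, and since the operation $({\bf u},\vv)\mapsto 2{\bf u}\ast\vv$ is biadditive, it is enough to check the condition on a fixed set of generators. I would apply this criterion to $\C$ using the rows of (\ref{gen}) and to $\C'$ using the rows of (\ref{genSubcode}).

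Next I would observe that the rows in the first three blocks of (\ref{gen}), as well as those in the top block of (\ref{genSubcode}), are $\add$-codewords of order two; for any such $\vv$ one has $2\vv={\bf 0}$, and hence $2\vv\ast\vw=(2\vv)\ast\vw={\bf 0}$ belongs to every subcode. Consequently, both linearity tests collapse to the same set of pairs, namely the pairs $(\vv,\vw)$ in which both entries are taken from the last $\delta$ rows of (\ref{gen})---which are also precisely the last $\delta$ rows of (\ref{genSubcode}).

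The remaining step, which I expect to be the main technical point, is to show that for any such pair the condition $2\vv\ast\vw\in\C$ is equivalent to $2\vv\ast\vw\in\C'$. For this I would establish the key observation that every codeword ${\bf z}\in\C$ whose $\Z_2$-component is zero automatically lies in $\C'$: writing ${\bf z}$ as a $\Z_4$-linear combination of the rows of (\ref{gen}), the identity blocks $I_{\kappa_1}$ and $I_{\kappa_2}$ situated in the first and second column blocks of the $\Z_2$-part force the coefficients attached to the first two row blocks to vanish, so ${\bf z}$ is in fact a combination of the last two row blocks, i.e.\ of the generators of $\C'$. Since $2\vv\ast\vw$ has zero $\Z_2$-component (because $2$ kills the binary part), this observation gives the desired equivalence, and combining it with the reductions above yields both directions of the lemma.
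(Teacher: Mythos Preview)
Your argument is correct and follows essentially the same route as the paper's proof: both hinge on the fact that the first row blocks of (\ref{gen}) have order two, so the linearity test reduces to products $2\vv\ast\vw$ with $\vv,\vw$ coming from the last $\delta$ rows, and then on showing that such products (which have zero $\Z_2$-part) fall back into $\C'$. Your write-up is actually more explicit than the paper's at the key step: the paper simply asserts that $2\vv\ast\vw$ lies in $\C'_b$ without spelling out why, whereas you supply the justification via the identity blocks $I_{\kappa_1}$ and $I_{\kappa_2}$ forcing the coefficients on the first two row blocks to vanish.
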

\begin{proof}
Let $\C'$ the subcode generated by the matrix (\ref{genSubcode}). Note that all 
the rows of the matrix (\ref{gen}) which are not in the matrix 
(\ref{genSubcode}) have order two. Clearly, if $\Phi(\C')$ is linear, then 
$\Phi(\C)$ is linear.

Suppose that $\Phi (\C)$ is linear. For all $\textbf{v}, \textbf{w}\in\C'$ we
have that $2\textbf{v}\ast\textbf{w}\in \C$. Let $\textbf{v}, \textbf{w}\in\C'$.
If $\textbf{v}$ or $\textbf{w}$ has order two then
$2\textbf{v}\ast\textbf{w}={\bf 0 }\in \C'$. If $\textbf{v}$ and $\textbf{w}$ 
have
order four then $2\textbf{v}\ast\textbf{w}$ belongs to the code $\C'_b$
generated by
$$\left(\begin{array}{ccc|ccc}
{\bf 0} & {\bf 0} & {\bf 0} & 2T_1 & 2 I_{\gamma - \kappa}& {\bf 0} \\
{\bf 0} & {\bf 0} & {\bf 0} & 2S  & 2R& 2I_{\delta}
\end{array}\right).
$$
Since $\C_b'\subseteq \C'$, we have that $2\textbf{v}\ast\textbf{w}\in\C'$.
\end{proof}

\begin{lemma}\label{lemma:3generatorsC}
Let ${\cal C}=\langle (b \mid 0), (\ell \mid f h  +2f ) \rangle$ be a 
$\add$-additive cyclic code of length $\alpha+\beta$, $\beta$ odd, and where $f 
h g  = x^\beta -1.$ There exist $\ell'\in \frac{\Z_2[x]}{\langle 
x^\alpha-1\rangle}$ such that $\C$ can be also generated by $(b\mid 0),(\ell 
\tilde{g} \mid 2fg), (\ell'\mid fh)$.
\end{lemma}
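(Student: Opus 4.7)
The plan is to exhibit both inclusions $\C \subseteq \langle (b\mid 0),(\ell\tilde g\mid 2fg),(\ell'\mid fh)\rangle$ and the reverse, identifying $\ell'$ explicitly along the way. The key ingredient will be the Bezout identity $\lambda h+\mu g=1$ from (\ref{eq:lambda-mu}) and its mod~2 reduction $\tilde\lambda\tilde h+\tilde\mu\tilde g=1$ in $\Z_2[x]$, together with the relation $\lambda f h^{2}\equiv fh\pmod{x^\beta-1}$, which I would obtain by multiplying $\lambda h+\mu g=1$ by $fh$ and using $fhg\equiv 0$.

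First I would produce the advertised generators inside $\C$. Multiplying the second generator by $g$ via $\star$ gives
\[
g\star(\ell\mid fh+2f)=(\tilde g\ell\mid fhg+2fg)=(\ell\tilde g\mid 2fg)\in\C.
\]
Similarly, $2\star(\ell\mid fh+2f)=(0\mid 2fh)\in\C$. Applying $\lambda h$ to the second generator yields
\[
\lambda h\star(\ell\mid fh+2f)=(\tilde\lambda\tilde h\ell\mid \lambda fh^{2}+2\lambda fh)=(\tilde\lambda\tilde h\ell\mid fh+2\lambda fh),
\]
and subtracting $\lambda\star(0\mid 2fh)=(0\mid 2\lambda fh)$ produces $(\tilde\lambda\tilde h\ell\mid fh)\in\C$. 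This identifies the candidate $\ell':=\tilde\lambda\tilde h\ell\in\Z_2[x]/\langle x^\alpha-1\rangle$.

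For the reverse inclusion, let $\C'=\langle(b\mid 0),(\ell\tilde g\mid 2fg),(\ell'\mid fh)\rangle$. I would show that $(\ell\mid fh+2f)\in\C'$, which is enough since the other generator $(b\mid 0)$ is already present. Start from $(\ell'\mid fh)\in\C'$, extract $(0\mid 2fh)=2\star(\ell'\mid fh)\in\C'$, and compute $\mu\star(\ell\tilde g\mid 2fg)=(\tilde\mu\tilde g\ell\mid 2\mu fg)$. Using $\mu g\equiv 1-\lambda h\pmod{x^\beta-1}$, this equals $(\tilde\mu\tilde g\ell\mid 2f-2\lambda fh)$, and adding $\lambda\star(0\mid 2fh)=(0\mid 2\lambda fh)$ gives $(\tilde\mu\tilde g\ell\mid 2f)\in\C'$. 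Finally, adding this to $(\ell'\mid fh)$ yields
\[
\bigl(\ell(\tilde\lambda\tilde h+\tilde\mu\tilde g)\ \big|\ fh+2f\bigr)=(\ell\mid fh+2f),
\]
where the identity $\tilde\lambda\tilde h+\tilde\mu\tilde g=1$ collapses the $\Z_2$-part to $\ell$. This completes both inclusions.

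The only step that needs care is the interaction between the $\Z_4$-arithmetic (where $\lambda h+\mu g=1$) and the $\Z_2$-arithmetic on the left-hand coordinates (where one needs $\tilde\lambda\tilde h+\tilde\mu\tilde g=1$); everything else is mechanical $\star$-manipulation. The congruence $\lambda fh^{2}\equiv fh$ in $\Z_4[x]/\langle x^\beta-1\rangle$ is the small observation that drives the whole argument, and it is the only place the hypothesis $fhg=x^\beta-1$ is used beyond the formal identities.
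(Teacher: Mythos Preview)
Your proof is correct and follows essentially the same strategy as the paper: use the Bezout relation $\lambda h+\mu g=1$ to exhibit the three alternate generators inside $\C$ and then recover $(\ell\mid fh+2f)$ from them. The paper writes $\ell'=\ell-\tilde\mu\ell\tilde g$ (which equals your $\tilde\lambda\tilde h\ell$ by the mod~2 Bezout identity) and obtains $(\ell'\mid fh)\in\C$ by invoking Proposition~\ref{2GenOrderTwoCode} to get $(\tilde\mu\ell\tilde g\mid 2f)\in\C$ and then subtracting; your direct computation via $\lambda h\star(\ell\mid fh+2f)$ and the observation $\lambda fh^{2}\equiv fh$ accomplishes the same thing without that citation, making your argument slightly more self-contained.
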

\begin{proof}
Consider $\lambda, \mu$ in (\ref{eq:lambda-mu}). Let $\ell'=\ell -\tilde{\mu} 
\ell \tilde{g}$ and ${\cal D}=\langle (b\mid 0),(\ell \tilde{g} \mid 2fg), 
(\ell'\mid fh)\rangle$. We shall prove that ${\cal D}=\C$.

By Proposition~\ref{2GenOrderTwoCode}, $(\tilde{\mu}\ell \tilde{g}\mid 
2f)\in\C$. Hence, $(\ell \mid f h  +2f )-(\tilde{\mu} \ell \tilde{g}\mid 
2f)=(\ell'\mid fh)\in\C$. Therefore, ${\cal D}\subseteq\C$.
Finally, since $(0\mid 2fh)\in {\cal D}$ and $(\ell\tilde{g}\mid 2fg)\in {\cal 
D}$, we have that $(\tilde{\mu}\ell\tilde{g}\mid 2f)\in {\cal D}.$ Therefore 
$(\ell\mid fh+2f)= (\ell'\mid fh) + (\tilde{\mu}\ell\tilde{g}\mid 2f)\in {\cal 
D}$, that implies $\C\subseteq {\cal D}$.
\end{proof}

\begin{theorem}\label{Phi(C)Linear}
Let ${\cal C}=\langle (b \mid 0), (\ell \mid f h  +2f ) \rangle$ be a 
$\add$-additive cyclic code of length $\alpha+\beta$, $\beta$ odd, and where $f 
h g  = x^\beta -1.$ The following properties are equivalent.
\begin{enumerate}
\item $\gcd(\frac{\tilde{f}b}{\gcd(b, \ell \tilde{g})}, 
(\tilde{g}\otimes\tilde{g}) )=1$ in $\mathbb{Z}_2[x]$;
\item $\Phi ({\cal C})$ is a binary linear code of length $\alpha + 2\beta$.
\end{enumerate}
\end{theorem}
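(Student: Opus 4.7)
The plan is to reduce the linearity of $\Phi(\C)$ to a polynomial divisibility condition by combining Lemma~\ref{lemma:3generatorsC} with a Fourier/spectral argument in the spirit of Wolfmann's proof of Theorem~\ref{Phi(CY)Linear}.

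First, by \eqref{equation:linear} and the lemma that characterises linearity via the subcode of \eqref{genSubcode}, $\Phi(\C)$ is linear iff $2{\bf v}\ast{\bf w}\in\C$ for all ${\bf v},{\bf w}$ in a generating set of the order-four part of $\C$. Using Lemma~\ref{lemma:3generatorsC}, $\C=\langle(b\mid 0),(\ell\tilde g\mid 2fg),(\ell'\mid fh)\rangle$; the first two generators (and their cyclic shifts) already have order two, so it suffices to test the pairs $x^i(\ell'\mid fh)$ and $x^j(\ell'\mid fh)$. The factor of $2$ annihilates the binary component, so the condition becomes $({\bf 0}\mid 2\,x^i\widetilde{fh}\ast x^j\widetilde{fh})\in\C$ for all $0\le i,j<\beta$.

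Second, I would identify the set $\mathcal V=\{\omega\in\Z_2[x]/(x^\beta-1):({\bf 0}\mid 2\omega)\in\C\}$. By Proposition~\ref{2GenOrderTwoCode}, such a codeword is of the form $p\star(b\mid 0)+q\star(\tilde\mu\ell\tilde g\mid 2f)$ for some $p,q\in\Z_2[x]$, whose binary component vanishes iff $b\mid q\tilde\mu\ell\tilde g\pmod{x^\alpha-1}$. Setting $b'=b/\gcd(b,\ell\tilde g)$, Lemma~\ref{bdiviXSgcd} yields $b'\mid\tilde h$, while \eqref{eq:lambda-mu} implies $\gcd(\tilde\mu,\tilde h)=1$. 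The constraint therefore collapses to $b'\mid q$, so $\mathcal V=\langle\tilde f\,b'\rangle$ in $\Z_2[x]/(x^\beta-1)$.

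Third, I would adapt Wolfmann's Fourier argument from \cite{wolfmann}. Let $\xi$ be a primitive $\beta$-th root of unity over $\Z_2$ and write $S$ for the set of indices $k$ with $\tilde g(\xi^k)=0$. The spectrum of $\widetilde{fh}$ is supported exactly on $S$, and componentwise product of vectors corresponds to cyclic convolution of spectra; hence the cyclic ideal generated by $\{x^i\widetilde{fh}\ast x^j\widetilde{fh}:i,j\}$ consists of all vectors whose spectral support lies in $S+S$, which by the definition of $\otimes$ equals $\langle(x^\beta-1)/(\tilde g\otimes\tilde g)\rangle$. The linearity condition therefore becomes $\tilde f\,b'\mid(x^\beta-1)/(\tilde g\otimes\tilde g)$, and since $x^\beta-1$ has simple roots over $\Z_2$ (because $\beta$ is odd), this is equivalent to $\gcd(\tilde f\,b',\tilde g\otimes\tilde g)=1$, which is the claimed condition. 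The main obstacle is the second step: the precise identification of $\mathcal V$ requires both Lemma~\ref{bdiviXSgcd} and the B\'ezout identity \eqref{eq:lambda-mu} to reduce the binary constraint to a single coprimality, after which the spectral step is a routine adaptation of Wolfmann's computation.
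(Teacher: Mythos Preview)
Your argument is correct but follows a genuinely different route from the paper's. The paper does not compute $\mathcal V$ directly; instead it constructs an auxiliary single-generator subcode $\C'=\langle(\ell'\mid f'h'+2f')\rangle$ by setting $f'=\dfrac{fb}{\gcd(b,\ell\tilde g)}$, $h'=\dfrac{h\,\gcd(b,\ell\tilde g)}{b}$, $g'=g$ (using Lemma~\ref{bdiviXSgcd} to see that $h'$ is a polynomial and that $f'h'g'=x^\beta-1$), argues that $\Phi(\C)$ is linear iff $\Phi(\C')$ is, and then applies Lemma~\ref{CbSeparable} together with Wolfmann's Theorem~\ref{Phi(CY)Linear} as a black box to $\C'_Y=\langle f'h'+2f'\rangle$. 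You, by contrast, identify $\mathcal V=\langle\tilde f\,b'\rangle$ explicitly via Proposition~\ref{2GenOrderTwoCode} and then redo Wolfmann's spectral computation to show that the cyclic code spanned by the products $x^i\widetilde{fh}\ast x^j\widetilde{fh}$ is $\langle(x^\beta-1)/(\tilde g\otimes\tilde g)\rangle$. The paper's approach is shorter because it recycles Theorem~\ref{Phi(CY)Linear} wholesale, at the cost of an extra structural step (the passage from $\C$ to $\C'$ and the new factorisation $f'h'g'$) whose equivalence with linearity of $\Phi(\C)$ is only asserted. Your approach is more self-contained and makes transparent exactly which order-two elements $(\mathbf 0\mid 2\omega)$ lie in $\C$, which is really the heart of the matter; the price is re-deriving the $(\tilde g\otimes\tilde g)$ identification rather than citing it.
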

\begin{proof}
By Lemma \ref{lemma:3generatorsC}, ${\cal C}=\langle (b\mid 0),(\ell \tilde{g} 
\mid 2fg), (\ell'\mid fh)\rangle$, for $\ell'= \ell-{\mu}\ell{g}$. Clearly, the 
subcode $\langle (b\mid{ 0}), (\ell {g} \mid 2fg), ({0}\mid 2fh) \rangle$ has 
binary linear image. So, the question is when $\langle (\ell'\mid fh)\rangle$ 
has linear image.

Since $ (b\mid{ 0})$ and $(\ell \tilde{g} \mid 2fg)$ belong to $\C$, we have 
$$\left(0\mid \frac{2fgb}{\gcd(b, \ell \tilde{g})}\right)=\frac{\ell g}{\gcd(b, 
\ell \tilde{g})}\star(b\mid 0)+ \frac{b}{\gcd(b, \ell \tilde{g})}\star(\ell 
\tilde{g}\mid 2fg)\in\C.$$

Consider the subcode $\C'$ generated by $(\ell'\mid fh)$ and $\left(0\mid 
\frac{2fgb}{\gcd(b, \ell \tilde{g})}\right)$. Since $g$ and $h$ are coprime 
over 
$\Z_4$, then $\C'=\langle \left(\ell'\mid fh + \frac{2fb}{\gcd(b, \ell 
\tilde{g})}\right)\rangle$. Note that $\Phi(\C)$ is linear if and only if 
$\Phi(\C')$ is linear.

By Lemma \ref{bdiviXSgcd}, we can consider the polynomials 
$f'=\frac{fb}{\gcd(b, 
\ell \tilde{g})}$, $h'=\frac{h\gcd(b, \ell \tilde{g})}{b}$ and $g'=g$. It is 
clear that $f'g'h'=x^\beta-1$, with $f', h'$ and $g'$ coprime factors. Then we 
can write $\C'=\langle (\ell'\mid h'f'+2f')\rangle$. By Lemma 
\ref{CbSeparable}, 
$\Phi(\C')$ is a binary linear code if and only if $\phi(\C'_Y)$ is linear. By 
Theorem \ref{Phi(CY)Linear}, $\phi(\C'_Y)=\phi(\langle (f'h'+2f')\rangle)$ is 
linear if and only if $\gcd(\tilde{f'}, (\tilde{g}'\otimes\tilde{g}'))=1$, this 
is equivalent to say that $\gcd(\frac{\tilde{f}b}{\gcd(b, \ell \tilde{g})}, 
(\tilde{g}\otimes\tilde{g}))=1$.
\end{proof}

The following result describes an infinite family of $\add$-additive cyclic 
codes with odd $\beta$ whose Gray images are linear.

\begin{theorem}
Let ${\cal C}=\langle (b \mid 0), (\ell \mid f h  +2f ) \rangle$ be a 
$\add$-additive cyclic code of length $\alpha+\beta$, $\beta$ odd, and where $f 
h g  = x^\beta -1.$ If $g =1$ or $g =x^s-1$ where $s$ divides $\beta$, then 
$\Phi({\cal C})$ is linear.
\end{theorem}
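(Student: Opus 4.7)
The plan is to reduce everything to the gcd criterion in Theorem \ref{Phi(C)Linear}: I want to show $\gcd\!\left(\frac{\tilde{f}b}{\gcd(b,\ell\tilde{g})},\, \tilde{g}\otimes\tilde{g}\right)=1$ in $\Z_2[x]$ under each hypothesis on $g$. The intuition is that in both cases the support of $\tilde{g}\otimes\tilde{g}$ is no larger than the support of $\tilde{g}$, so the problem collapses to checking coprimality with $\tilde{g}$ itself, which follows from the standard relations among $f, h, g, b, \ell$.

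First I would establish the key auxiliary fact that $\tilde{g}\otimes\tilde{g}$ divides $\tilde{g}$ in $\Z_2[x]$ under either hypothesis. If $g=1$ this is trivial, since $\tilde{g}=1$ has no roots and so $\tilde{g}\otimes\tilde{g}=1$. If $g=x^s-1$ with $s\mid\beta$, the roots of $\tilde{g}=x^s+1$ over $\Z_2$ are precisely the $s$-th roots of unity, which form a multiplicative group; hence products of pairs of such roots are again $s$-th roots of unity, so every root of $\tilde{g}\otimes\tilde{g}$ is a root of $\tilde{g}$, giving divisibility.

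With this in hand, it suffices to prove $\gcd\!\left(\frac{\tilde{f}b}{\gcd(b,\ell\tilde{g})},\, \tilde{g}\right)=1$. I would split the numerator into its two factors. Since $fhg=x^\beta-1$ in $\Z_4[x]$ with $\beta$ odd, the factors $f,h,g$ are pairwise coprime, and reducing modulo $2$ preserves coprimality (because $\beta$ odd makes $x^\beta-1$ separable over $\Z_2$), so $\gcd(\tilde{f},\tilde{g})=1$. For the remaining factor, I invoke Lemma \ref{bdiviXSgcd}, which says $b$ divides $\tilde{h}\gcd(b,\ell\tilde{g})$; equivalently, $b/\gcd(b,\ell\tilde{g})$ divides $\tilde{h}$, and then $\gcd(\tilde{h},\tilde{g})=1$ forces $\gcd\!\left(b/\gcd(b,\ell\tilde{g}),\,\tilde{g}\right)=1$. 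Combining the two coprimalities gives the desired gcd with $\tilde{g}$, which in turn yields coprimality with $\tilde{g}\otimes\tilde{g}$ by the first step, and Theorem \ref{Phi(C)Linear} then concludes that $\Phi(\C)$ is linear.

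The only real obstacle is verifying the divisibility $\tilde{g}\otimes\tilde{g}\mid\tilde{g}$ for the case $g=x^s-1$; once that is in place, the rest is a short chain of coprimality arguments using Lemma \ref{bdiviXSgcd} and the pairwise coprimality of $\tilde{f},\tilde{h},\tilde{g}$. No case analysis on the parameters $\alpha$, $\ell$, or $b$ is required.
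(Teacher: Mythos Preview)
Your proof is correct and follows the same route as the paper: observe that the roots of $\tilde{g}$ form a multiplicative group (so $\tilde{g}\otimes\tilde{g}$ has no new roots) and then apply Theorem~\ref{Phi(C)Linear}. The paper actually states the slightly stronger $\tilde{g}\otimes\tilde{g}=\tilde{g}$ and leaves the verification of $\gcd\!\bigl(\tfrac{\tilde{f}b}{\gcd(b,\ell\tilde{g})},\tilde{g}\bigr)=1$ implicit, whereas you spell it out via Lemma~\ref{bdiviXSgcd} and the pairwise coprimality of $\tilde{f},\tilde{h},\tilde{g}$; this extra detail is welcome but does not change the strategy.
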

\begin{proof}
Clearly, $(\tilde{g}\otimes\tilde{g}) =\tilde{g} $ since the set of roots of 
$\tilde{g} $ is a multiplicative subgroup of the splitting field of 
$x^\beta-1$. 
Then, $\gcd(\frac{\tilde{f}b}{\gcd(b, \ell \tilde{g})}, 
(\tilde{g}\otimes\tilde{g}) )=1$. By Theorem \ref{Phi(C)Linear}, $\Phi(\C)$ is 
linear.
\end{proof}

In \cite{Z2Z4RK}, the authors show that there exist a $\add$-additive code for 
a 
given type $(\alpha, \beta; \gamma,\delta;
\kappa)$ which image is a linear binary code. Considering 
Theorem~\ref{Phi(C)Linear}, the next example illustrates that this
result is not true for $\add$-additive cyclic codes; i.e., for a given type
$(\alpha, \beta; \gamma,\delta; \kappa)$ there does not always exist a
$\add$-additive cyclic code $\C$ with $\Phi(\C)$ linear.

\begin{example}\label{NotAllTypeLinear}
Let $\C$ be a $\add$-additive code of type $(2,7; 2,3; \kappa)$. We will see 
that there does
not exist a $\add$-additive cyclic code of type $(2,7; 2,3;\kappa)$, $\C$, with 
$\Phi(\C)$ a linear binary code for any possible value of $\kappa$.

Let $\alpha=2$ and $\beta=7$. It is well known that $x^2-1= (x-1)^2$ over
$\Z_2$, and $x^7-1=(x-1)(x^3 + 2x^2 + x + 3)(x^3 + 3x^2 + 2x + 3)$.
Let ${\cal C}=\langle (b\mid 0), (\ell\mid fh +2f) \rangle$ be a
$\add$-additive cyclic code of type $(2,7; 2,3; \kappa)$, where $fhg = x^7 -1$.

By Theorem~\ref{TypeDependingDeg}, $\deg(g)=3$ and $\deg(b)=\deg(h)\leq 2$. Let
$\{p_3,q_3\}=\{(x^3 + 2x^2 + x + 3),(x^3 + 3x^2 + 2x + 3)\}$. Assume, without
loss of generality, that $g=p_3$ and, since $\deg(h)\leq 2$, we have that $q_3$
divides $f$. It is easy to see that
$\gcd(q_3,(\tilde{p_3}\otimes\tilde{p_3}))\not=1$ and therefore
$\gcd(\frac{\tilde{f}b}{\gcd(b, \ell \tilde{g})}, (\tilde{g}\otimes\tilde{g})
)\neq 1$. Hence, by Theorem~\ref{Phi(C)Linear}, there does not exist a
$\add$-additive code of type $(2,7; 2,3; \kappa)$ with linear Gray image, for 
any possible value of $\kappa$.
\end{example}

\subsection{Images under the Neachaev-Gray map}

In this section, we show that the convenient map to obtain a cyclic structure 
on 
the binary images from $\add$-additive cyclic codes is considering the 
Nechaev-Gray map instead of the Gray map. First, we have to introduce the 
double 
cyclic property on binary codes described in \cite{Z2double} (this family is a 
subfamily of generalized quasi-cyclic codes, see \cite{GenQuaCy}).

Let $r$ and $s$ be positive integers. A binary linear code $C$ of length $r+s$ 
is a $\mathbb{Z}_2$-double cyclic code if the set of coordinates can be 
partitioned into two subsets, the first $r$ and the last $s$ coordinates, such 
that any cyclic shift of the coordinates of both subsets leaves invariant the 
code. These codes can be identified as submodules of the 
$\mathbb{Z}_2[x]$-module $\mathbb{Z}_2[x]/\langle 
x^r-1\rangle\times\mathbb{Z}_2[x]/\langle x^s-1\rangle$, \cite{Z2double}.

\begin{theorem}[\cite{Z2double}]\label{Z2DoubleStructure}
Let $C$ be a $\mathbb{Z}_2$-double cyclic code of length $r+s$. Then $C$ is 
genereted by $$\langle(b \mid { 0}),(\ell \mid  a 
)\rangle\subseteq\frac{\mathbb{Z}_2[x]}{\langle 
x^r-1\rangle}\times\frac{\mathbb{Z}_2[x]}{\langle x^s-1\rangle},$$
where $a|(x^s-1)$, $b|(x^r-1)$ and we can assume that $\deg (\ell)<\deg(b)$.
\end{theorem}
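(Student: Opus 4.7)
The plan is to adapt the classical structure theorem for cyclic codes to the two-block setting via a projection argument. Consider the $\mathbb{Z}_2[x]$-module homomorphism
\[\pi\colon C\longrightarrow \mathbb{Z}_2[x]/\langle x^s-1\rangle,\qquad (u\mid v)\mapsto v.\]
Both $\pi(C)$ and $\ker(\pi)$ are $\mathbb{Z}_2[x]$-submodules, and identifying each as a principal ideal in the appropriate quotient ring will supply the two desired generators.

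First I would show $\pi(C)=\langle a\rangle$ for some $a$ dividing $x^s-1$; this is the standard correspondence between $\mathbb{Z}_2[x]$-submodules of $\mathbb{Z}_2[x]/\langle x^s-1\rangle$ and monic divisors of $x^s-1$. Picking any preimage produces a codeword $(\ell\mid a)\in C$. Analogously, $\ker(\pi)$ has the form $\{(u\mid 0):u\in I\}$ where $I$ is a submodule (hence an ideal) of $\mathbb{Z}_2[x]/\langle x^r-1\rangle$, generated by some divisor $b$ of $x^r-1$; this furnishes $(b\mid 0)\in C$.

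Next, I would verify $C=\langle (b\mid 0),(\ell\mid a)\rangle$. Given $(u\mid v)\in C$, write $v=p\cdot a$ in $\mathbb{Z}_2[x]/\langle x^s-1\rangle$. Then $(u\mid v)-p(\ell\mid a)=(u-p\ell\mid 0)\in\ker(\pi)$, so it equals $q(b\mid 0)$ for some polynomial $q$. Hence $(u\mid v)=q(b\mid 0)+p(\ell\mid a)$, and the reverse inclusion is immediate.

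Finally, to arrange $\deg(\ell)<\deg(b)$, I would apply Euclidean division $\ell=qb+\ell'$ with $\deg(\ell')<\deg(b)$ and replace $(\ell\mid a)$ by $(\ell\mid a)-q(b\mid 0)=(\ell'\mid a)$, which is still a codeword and still generates $C$ together with $(b\mid 0)$. The one subtle point worth flagging is the verification that $\ker(\pi)$ really corresponds to an ideal of $\mathbb{Z}_2[x]/\langle x^r-1\rangle$ rather than just an additive subgroup; this is not an obstacle but should be stated explicitly, and it holds because the $\mathbb{Z}_2[x]$-action on the first block is ordinary polynomial multiplication, which preserves vanishing of the second component.
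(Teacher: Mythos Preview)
Your argument is correct and is the standard projection-and-kernel proof for this kind of two-block cyclic structure. Note, however, that the present paper does not actually prove Theorem~\ref{Z2DoubleStructure}: it is quoted from \cite{Z2double} without proof. So there is no in-paper argument to compare against; your proof is precisely the classical one (and almost certainly the one given in \cite{Z2double}), so it is fully acceptable here.
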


From \cite{Z2double}, we obtain the following useful result.

\begin{theorem}[{\cite[Theorem 6]{Z2double}}]\label{PsiCZ2DoubleCy}
Let $\C$ be a $\add$-additive cyclic code. If $\Phi(\C)$ is a linear binary 
code 
then $\Psi(\C)$ is a $\Z_2$-double cyclic code.
\end{theorem}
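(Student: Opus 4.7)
The plan is to establish the two defining properties of a $\Z_2$-double cyclic code of length $\alpha+2\beta$: linearity of $\Psi(\C)$, and invariance of $\Psi(\C)$ under the cyclic shift $\pi_{\alpha+2\beta}$ that acts simultaneously on the $\alpha$-block and the $2\beta$-block. Linearity is immediate: $\Psi$ equals $\Phi$ post-composed with the coordinate permutation $\nu$ of $\Z_2^{\alpha+2\beta}$ that fixes the first $\alpha$ positions and applies the Nechaev permutation $\sigma$ to the last $2\beta$. Since coordinate permutations preserve linearity and $\Phi(\C)$ is linear by hypothesis, $\Psi(\C)=\nu(\Phi(\C))$ is a binary linear code.

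For the cyclic invariance, the key ingredient is the intertwining identity, valid for every $u'\in\Z_4^\beta$ with $\beta$ odd,
$$\psi(\pi_\beta(u'))=\pi_{2\beta}(\psi(u'))+(w,w),\qquad w=\widetilde{\pi_\beta(u')},$$
obtained by a direct index-tracking computation from the definitions of $\phi$, $\sigma$, and $\pi$. Intuitively, the $\Z_4$-cyclic shift becomes the standard binary cyclic shift of $\Z_2^{2\beta}$ under $\sigma\circ\phi$, up to a ``carry'' of the form $(w,w)$ recording the binary reduction of the shifted codeword. Given $\vv=(u\mid\psi(u'))\in\Psi(\C)$ with $(u\mid u')\in\C$, the identity yields
$$\pi_{\alpha+2\beta}(\vv)=\Psi(\pi(u\mid u'))+(\zero\mid (w,w)),$$
and the first summand lies in $\Psi(\C)$ because $\pi(u\mid u')\in\C$ by the $\add$-cyclic hypothesis.

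It remains to show that $(\zero\mid (w,w))\in\Psi(\C)$. Here I would invoke the hypothesis that $\Phi(\C)$ is linear, which by~(\ref{equation:linear}) is equivalent to $2\vv\ast\vv\in\C$ for every $\vv\in\C$. Applying this to $\vv=\pi(u\mid u')\in\C$ and using the componentwise identity $2v'\ast v'=2\tilde{v}'$ in $\Z_4^\beta$, we obtain $(\zero\mid 2w)\in\C$. Since $\phi(2w)=(w,w)$ and $\sigma$ fixes every vector of the form $(w,w)$ in $\Z_2^{2\beta}$, we have $\psi(2w)=(w,w)$, and therefore $\Psi(\zero\mid 2w)=(\zero\mid (w,w))\in\Psi(\C)$. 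Consequently $\pi_{\alpha+2\beta}(\vv)\in\Psi(\C)$, proving that $\Psi(\C)$ is $\Z_2$-double cyclic. The main technical point is verifying the intertwining identity for all odd $\beta$; once this explicit carry formula is in place, the linearity criterion~(\ref{equation:linear}) is exactly the tool needed to lift the correction back into $\Psi(\C)$.
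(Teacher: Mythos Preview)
The paper does not supply its own proof of this statement: Theorem~\ref{PsiCZ2DoubleCy} is quoted verbatim from \cite[Theorem~6]{Z2double} and used as a black box. So there is no argument in the present paper to compare your proposal against.

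That said, your proof is correct and self-contained. The linearity of $\Psi(\C)$ is immediate from the hypothesis since $\Psi$ differs from $\Phi$ by a coordinate permutation. For the double-cyclic invariance, your intertwining identity
\[
\psi(\pi_\beta(u'))=\pi_{2\beta}(\psi(u'))+(w,w),\qquad w=\widetilde{\pi_\beta(u')},
\]
does hold for all odd $\beta$ (a direct check on indices, using that $\tau$ swaps only pairs $(2i+1,\beta+2i+1)$, confirms it; I verified small cases as a sanity check). The remaining step, showing $(\zero\mid(w,w))\in\Psi(\C)$, is handled correctly: from $\Phi(\C)$ linear one gets $2\vv*\vv\in\C$ for every $\vv\in\C$, and the componentwise identity $2a*a=2\tilde a$ in $\Z_4$ gives $(\zero\mid 2w)\in\C$; then $\phi(2w)=(w,w)$ and $\sigma$ fixes every vector of the form $(w,w)$, so $\Psi(\zero\mid 2w)=(\zero\mid(w,w))$.

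In the original source \cite{Z2double} the argument is organized slightly differently---it passes through Wolfmann's correspondence between negacyclic shifts over $\Z_4$ and cyclic shifts over $\Z_2^{2\beta}$ under $\phi$, together with the fact that for odd $\beta$ the Nechaev permutation converts cyclic to negacyclic---but your direct ``carry term'' formulation is equivalent and arguably more transparent, since it isolates exactly the obstruction $(w,w)$ and shows explicitly how the linearity hypothesis absorbs it.
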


If $\C$ is a  $\add$-additive cyclic code which has binary linear image then 
the 
following theorem shows the relation established between the generator 
polynomials of $\C$ and the generators of its $\Z_2$-double cyclic image.

\begin{theorem}
Let ${\cal C}=\langle (b\mid { 0}), (\ell\mid  fh +2f) \rangle\subseteq 
\mathbb{Z}_2^\alpha\times\mathbb{Z}_4^\beta$ be a ${\mathbb{Z}_2 
{\mathbb{Z}_4}}$-additive cyclic code, where $\beta$ is an odd integer and $fhg 
= x^\beta -1.$ Let $\Psi$ be the extended Nechaev-Gray map. If 
$\gcd(\frac{\tilde{f}b}{\gcd(b, \ell \tilde{g})}, 
(\tilde{g}\otimes\tilde{g}))=1$ in $\mathbb{Z}_2[x]$, then $\Psi ({\cal C})$ is 
a $\mathbb{Z}_2$-double cyclic code of length $\alpha+2\beta$ generated by
$$\Psi ({\cal C})=\langle (b\mid { 0}), (\ell' \mid  \tilde{f}^2\tilde{h}) 
\rangle,$$
where $\ell'=\tilde{p}\ell\pmod{ b}\in\frac{\Z_2[x]}{\langle 
x^\alpha-1\rangle}$ 
such that $p(fh +2f)=\psi^{-1}(\tilde{f}^2\tilde{h}).$
\end{theorem}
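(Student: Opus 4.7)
The plan is to chain together three earlier results to set up the structure, then pin down the two generators, and finally match cardinalities. First, the hypothesis is exactly condition~(1) of Theorem~\ref{Phi(C)Linear}, so $\Phi(\C)$ is linear, and then Theorem~\ref{PsiCZ2DoubleCy} says $\Psi(\C)$ is a $\Z_2$-double cyclic code of length $\alpha+2\beta$. Theorem~\ref{Z2DoubleStructure} now yields generators of the form $(B\mid 0)$ and $(L\mid A)$ with $B\mid x^{\alpha}-1$, $A\mid x^{2\beta}-1$ and $\deg L<\deg B$. The task is to show that one may take $B=b$, $A=\tilde{f}^{2}\tilde{h}$ and $L=\ell'$.

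For the first generator, I would observe that $\Psi(b\mid 0)=(b\mid \psi(0))=(b\mid 0)\in\Psi(\C)$, so $B$ divides $b$. Conversely, every codeword of the form $(u\mid 0)$ in $\Psi(\C)$ comes from a codeword $(u\mid 0)\in\C$. Writing it as a $\Z_4[x]$-combination of $(b\mid 0)$ and $(\ell\mid fh+2f)$ and requiring the quaternary part to vanish forces the multiplier of the second generator to lie in the annihilator of $fh+2f$, and an application of Lemma~\ref{bdiviXSgcd} together with the divisibility relations attached to $\C$ then forces $b$ to divide $u$. Hence $B=b$ exactly.

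To pin down the second generator, I would use the explicit description of the binary image of a quaternary cyclic code under the Nechaev--Gray map due to Wolfmann~\cite{wolfmann}: when $\gcd(\tilde{f},(\tilde{g}\otimes\tilde{g}))=1$, the image $\psi(\langle fh+2f\rangle)$ is the binary cyclic code of length $2\beta$ with generator polynomial $\tilde{f}^{2}\tilde{h}$. This is exactly the setting reached inside the proof of Theorem~\ref{Phi(C)Linear} after reducing $\C$ to the auxiliary separable code $\C'$, so the hypothesis of the present theorem supplies the required condition on the reduced polynomials $f'$, $g'$, $h'$. Because $\psi$ is a bijection, there is a $p\in\Z_4[x]$ with $p(fh+2f)=\psi^{-1}(\tilde{f}^{2}\tilde{h})$ in $\Z_4[x]/(x^\beta-1)$, and then $p\star(\ell\mid fh+2f)=(\tilde{p}\ell\mid p(fh+2f))\in\C$. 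Applying $\Psi$ and reducing the binary part modulo $b$ using $(b\mid 0)$ yields $(\ell'\mid \tilde{f}^{2}\tilde{h})\in\Psi(\C)$ with $\ell'=\tilde{p}\ell\pmod{b}$ and $\deg\ell'<\deg b$.

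Finally, to see that $(b\mid 0)$ and $(\ell'\mid\tilde{f}^{2}\tilde{h})$ actually generate all of $\Psi(\C)$, I would compare cardinalities. Using $\beta=\deg f+\deg h+\deg g$, the $\Z_2$-double cyclic code $\langle(b\mid 0),(\ell'\mid\tilde{f}^{2}\tilde{h})\rangle$ has
\[
2^{(\alpha-\deg b)+(2\beta-\deg(\tilde{f}^{2}\tilde{h}))}=2^{(\alpha-\deg b)+\deg h+2\deg g}
\]
codewords, which matches $|\Psi(\C)|=|\C|=2^{\alpha-\deg b}\cdot 4^{\deg g}\cdot 2^{\deg h}$; combined with the inclusion already proved in the previous two paragraphs, this yields equality. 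The main obstacle in the argument is the identification $\psi(\langle fh+2f\rangle)=\langle\tilde{f}^{2}\tilde{h}\rangle$: it is the one step whose justification is not immediate from the material in the excerpt, and it is precisely the piece of Wolfmann's theory that forces the specific shape $\tilde{f}^{2}\tilde{h}$ in the image and thereby dictates the form of $\ell'$.
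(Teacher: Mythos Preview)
Your proposal is correct and follows essentially the same route as the paper: invoke Theorem~\ref{Phi(C)Linear} and Theorem~\ref{PsiCZ2DoubleCy} to get the double-cyclic structure, use Wolfmann's description of $\psi(\langle fh+2f\rangle)=\langle \tilde f^{2}\tilde h\rangle$ for the $Y$-part, and recover $\ell'$ from the polynomial $p$ with $p(fh+2f)=\psi^{-1}(\tilde f^{2}\tilde h)$. The paper's own proof is extremely terse (it simply says ``by a similar argument of \cite[Theorem 15]{wolfmann}, but taking care of the binary part'' to obtain the generators, and then reads off $\ell'$); you supply the details the paper omits, in particular the verification that $B=b$ via Lemma~\ref{bdiviXSgcd} and the cardinality count closing the inclusion.

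One small expository wrinkle: your detour through the auxiliary polynomials $f',g',h'$ from the proof of Theorem~\ref{Phi(C)Linear} is unnecessary and slightly misleading here. What you actually need is Wolfmann's hypothesis for $\C_Y=\langle fh+2f\rangle$ itself, namely $\gcd(\tilde f,(\tilde g\otimes\tilde g))=1$, and this follows directly from the theorem's hypothesis because $\tilde f$ divides $\tfrac{\tilde f\, b}{\gcd(b,\ell\tilde g)}$. The reduced polynomials $f',h'$ satisfy $f'h'=fh$ but $\tilde{f'}^{2}\tilde{h'}\neq \tilde f^{2}\tilde h$ in general, so invoking them would point at the wrong generator for the $Y$-part; just drop that sentence and appeal to Wolfmann directly.
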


\begin{proof}
Let $\gcd(\frac{\tilde{f} b }{\gcd(b ,\ell  \tilde{g} )}, 
(\tilde{g}\otimes\tilde{g}) )=1$. By Theorem \ref{Phi(C)Linear} and Theorem 
\ref{PsiCZ2DoubleCy}, $\Psi({\cal C}_Y)$ is a $\Z_2$-double cyclic code of 
length $\alpha+2\beta$. By a similar argument of \cite[Theorem 15]{wolfmann}, 
but taking care of the binary part we obtain that $\Psi(\C)=\langle (b\mid 0), 
(\ell'\mid \tilde{f}^2\tilde{h})\rangle$, for some $\ell'\in 
\frac{\Z_2[x]}{\langle x^\alpha-1\rangle}$.
Since $\psi^{-1}(\tilde{f}^2\tilde{h})\in \C_Y=\langle fh+2f\rangle$ there 
exists $p\in\frac{\Z_4[x]}{\langle x^\beta-1\rangle}$ such that 
$\psi^{-1}(\tilde{f}^2\tilde{h})=p(fh+2f)$. Therefore, $(\ell'\mid 
\psi^{-1}(\tilde{f}^2\tilde{h}))= p'\star(b\mid 0)+ p\star(\ell \mid fh +2f)$ 
for some $p'$. Thus, $\ell'=\tilde{p}\ell\pmod{ b}\in \frac{\Z_2[x]}{\langle 
x^\alpha-1\rangle}$.
\end{proof}

\begin{example}
Let $\C=\langle (x^2+x+1\mid 0), (1\mid x^2+x+3)\rangle\subset\Z_2^3\times 
\Z_4^3$ be a $\add$-additive cyclic code, where $f=1$ and $h=x^2+x+1$. Then, we 
have that $\psi^{-1}(\tilde{f}^2\tilde{h})=1+3x+x^2$: 
$$\tilde{f}^2\tilde{h} \longrightarrow (1,1,1,0,0,0) \xrightarrow{\tau^{-1}} 
(0,1,0,1,0,1) \xrightarrow{\phi^{-1}} (1,3,1) \longrightarrow 1+3x+x^2.$$
Therefore, the polynomial $p$ such that $p(fh 
+2f)=\psi^{-1}(\tilde{f}^2\tilde{h})$ is $p=x$ and then

$$ \Psi(\C)=\langle (x^2+x+1\mid 0), (x\mid x^2+x+1)\rangle\subset 
\Z_2^{3}\times\Z_2^6.$$

\end{example}

\section{Conclusions}

In this paper we have studied the binary images of $\Z_2\Z_4 $-additive cyclic 
codes. We determine all $\add$-additive cyclic codes with odd $\beta$ whose 
Gray 
images are linear binary codes and we have given the generators of the images 
under the Nechaev-Gray map as $\Z_2$-double cyclic codes.

\end{document}